\newtheorem{theorem}{Theorem}[section]
\newtheorem{assumption}[theorem]{Assumption}
\crefname{assumption}{Assumption}{Assumptions}
\theoremstyle{definition} 
\newtheorem{definition}[theorem]{Definition}
\newtheorem{remark}[theorem]{Remark}
\newcommand{\coloneq}{\coloneqq}
\newcommand{\eqcolon}{\eqqcolon}
\newcommand{\diag}{\mathop{\mathrm{diag}}\nolimits}
\newcommand{\mat}[1]{\left\lbrack\begin{matrix} #1 \end{matrix}\right\rbrack}
\newcommand{\vecSmallT}[1]{\lbrack\begin{matrix} #1 \end{matrix}\rbrack^{\transp}}
\newcommand{\transp}{\top}
\newcommand{\R}{\mathbb{R}}
\newcommand{\Z}{\mathbb{Z}}
\newcommand{\N}{\mathbb{N}}
\newcommand{\ie}{i.\,e.\xspace}
\newcommand{\eg}{e.\,g.\xspace}
\m@th\displaystyle{##}$\hfil}
\m@th\displaystyle{##}$\hfil}
\m@th\displaystyle{##}$\hfil}
\newcommand{\smashUnderbrace}[2]{\vphantom{#1}\smash{\underbrace{#1}_{#2}}}
\begin{document}

\title{Convergence Rate Abstractions for Weakly-Hard Real-Time Control}
\subtitle{A General Framework for Analysis and Co-Design of Control and Scheduling}
\author{Maximilian Gaukler, Tim Rheinfels, Peter Ulbrich, Günter Roppenecker\\
Friedrich-Alexander-Universität Erlangen-Nürnberg\\
\small max.gaukler@fau.de, tim.rheinfels@fau.de, peter.ulbrich@fau.de, guenter.roppenecker@fau.de
}
\setkomafont{paragraph}{\bfseries \rmfamily}
\setkomafont{author}{\normalsize}
\setkomafont{date}{\normalsize}
\subject{Technical Report}
\date{first published Dec 2019; revised June 2020 (extended \cref{sec:related-work})}

\maketitle

\paragraph{Abstract}
Classically, a control loop is designed to be executed strictly periodically. This is, however, difficult to achieve in many scenarios, for example, when overload or packet loss cannot be entirely avoided. Here, \emph{weakly-hard} real-time control systems are a common approach which relaxes timing constraints and leverages the inherent robustness of controllers.
Yet, their analysis is often hampered by the complexity arising from the system dimension and the vast number of possible timing sequences.
In this paper, we present the novel concept of \emph{convergence rate abstractions} that provide a sound yet simple one-dimensional system description. This approach simplifies the stability analysis of weakly-hard real-time control systems. At the same time, our abstractions facilitate efficient computation of bounds on the worst-case system state at run-time and thus the implementation of adaptation mechanisms.

\section{Motivation}

Traditionally, control systems and their real-time execution platforms are designed as independent entities. This split approach requires deterministic scheduling and execution of the real-time tasks implementing the controllers. Consequently, the systems must be designed for the worst-case scenario of maximum disturbance and execution times, which implies excessive over-provisioning of computing resources. Given the ever-increasing application and system complexity alongside the pressure to utilize powerful but non-deterministic general-purpose hardware, the inevitable pessimistic overapproximations become unacceptable.

Another approach that is becoming increasingly accepted is, therefore, to co-design controller and real-time system. First introduced by Seto et al. \cite{seto:96:rtss}, co-design aims to (1) relax scheduling requirements, that is periodicity and deadline adherence, and to (2) adapt the controller design while still guaranteeing stability. The approaches developed since range from more flexible task models~\cite{dai:19:emsoft} and the adjustment of sampling periods~\cite{greco:11:toac,castane:06:ecrts,cervin:02:rts} to weakly-hard real-time systems that shift away from stringent deadlines. The latter gained wide popularity in the form of $(m,K)$-firm scheduling \cite{Hamdaoui1995,ramanathan:97:isftc}: Instead of executing every job instance of a task, only at least $m$ out of $K$ consecutive job releases must meet their deadlines.

The downside of easing real-time scheduling demands, however, is to maintain the control stability. Two approaches can be distinguished: (1) To limit the adaptation potential of the real-time scheduling such that even in the worst execution scenario, the controller can handle the worst possible disturbance. This corresponds to a static design that can only benefit from the robustness inherent to the controller. (2) Dynamic scheduling that optimizes resource allocation under a given maximum error constraint. Here, information on the current system state is leveraged to extend the adaptation potential in benign disturbance scenarios.

From a design point of view, adaptation at run-time is undoubtedly beneficial. However, to ensure adequate scheduling of jobs, dynamic estimation of control stability is necessary at run-time as well. In practice, this analysis entails a trade-off between exact models, which are often prohibitively complex to compute, and simple models that inevitably impose pessimism.

\begin{figure}
\begin{centering}
	\includegraphics{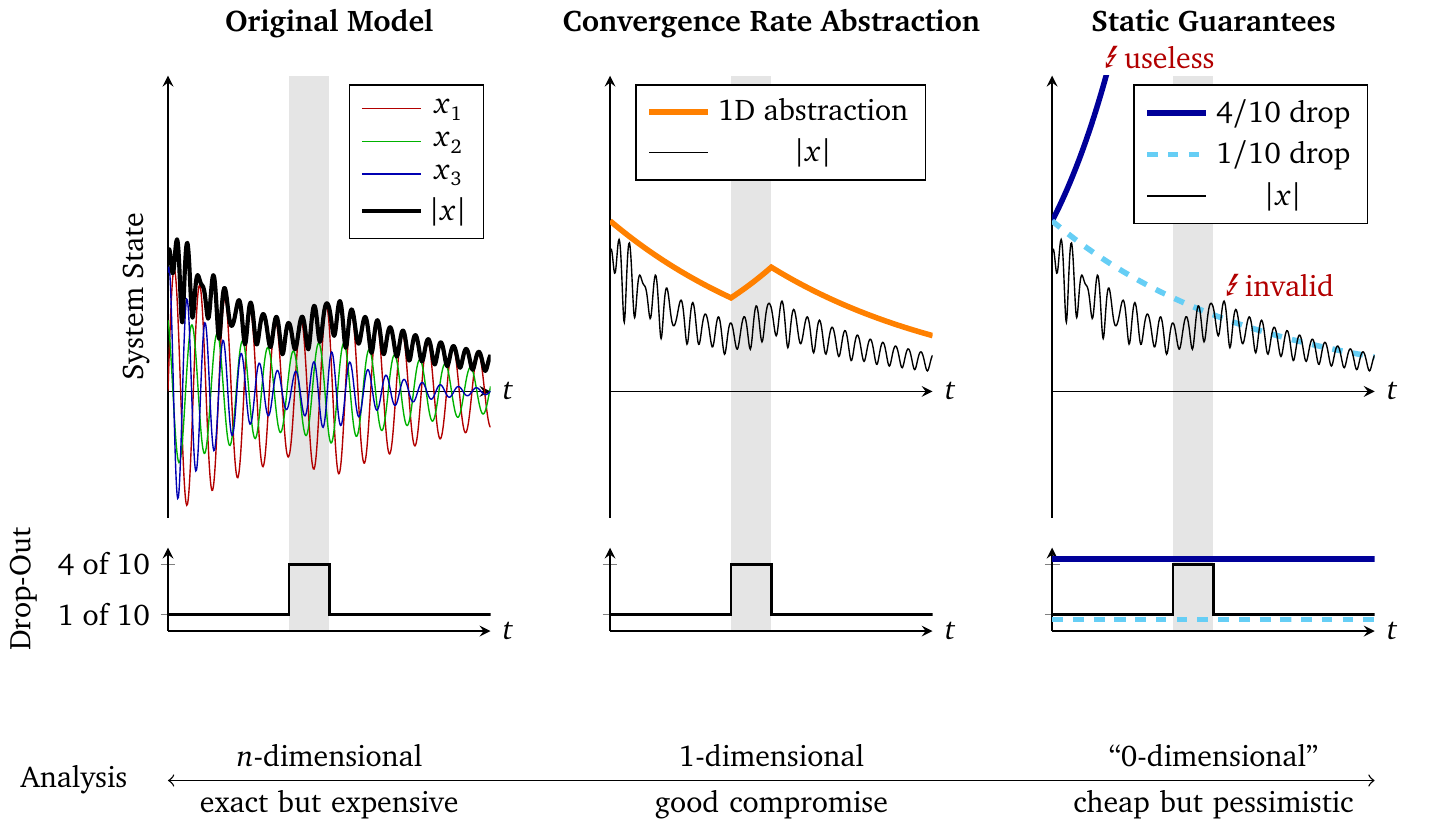}
\end{centering}
\caption{This illustration depicts that dynamic convergence rate abstractions (center) aim to hit a sweet spot for stability analysis between the high complexity of the original system~(left) and the pessimism resulting from static guarantees~(right).}
\label{fig:intro}
\end{figure}

\subsection{Problem Illustration}

We showcase the resulting dilemma by the exemplary system in \cref{fig:intro}, which is defined by state $x=\vecSmallT{x_1 & x_2 & x_3}$ with magnitude $|x|$ and disturbed by a known but varying drop-out of actuations (e.g., deadline misses, packet losses). We assume the system to be stable for a 1-of-10 drop-out with rare short-term episodes of 4-of-10 (transient overload).
Analysis of the full-dimensional system model (\cref{fig:intro}, left) allows for exact results: the system remains stable, and the initial disturbance decays despite the short-term overload situation.
However, computing state bounds from the full-dimensional model is associated with extensive overheads, especially because all possible combinations of timing and disturbance must be considered. This impedes evaluation as part of run-time scheduling.

Contrary, static analysis (\cref{fig:intro}, right) is done at design-time and is thus relatively simple and free of run-time overheads. However, it is inherently plagued by analysis pessimism as it must always assume the worst execution pattern. In our example, the system state is unstable (exponentially growing) for the assumed maximum 4-of-10 drop-out, rendering static 4-of-10 guarantees \emph{useless}. Therefore, the considered scenario cannot be shown stable using static guarantees within a \enquote{$n$-of-10} framework. At the same time, analysis for a 1-of-10 drop-out (dotted line) is quite close to reality, yet its result becomes \emph{invalid} in 4-of-10 conditions due to the violated assumption of a 1-of-10 drop-out.

In this paper, we aim to solve the aforementioned dilemma between sound state estimation and run-time efficiency. We, therefore, introduce the concept of one-dimensional convergence rate abstractions (\cref{fig:intro},~center) that hit the sweet spot between complexity and pessimism: While the bound for $|x|$ is somewhat pessimistic, stability can be shown without considering the whole complexity of the original system. The \emph{statefulness} of our abstractions enables dynamic yet efficient estimation: In simple terms, the abstraction computes a scalar \emph{damage counter} variable that increases or decreases over time depending on whether the current drop-out is large or small. By that context-sensitivity, the convergence rate abstraction can fully exploit the transient nature of the assumed overload condition.

\subsection{Solution Statement}
\label{sec:intro:solution}

Formally, for the special case of $(m,K)$-firm scheduling with \emph{fixed} parameters, there is an existing approach that implicitly uses a one-dimensional abstraction for the evaluation of stability: In \cite{Huang2019}, the authors prove the stability of a nonlinear sampled-data control loop by considering an upper bound for $|x|$ similar to the graph in \cref{fig:intro} (center). Remarkably, their approach works without constructing a Lyapunov function or a solution of the dynamics. Instead, there are only two relevant requirements, which correspond to the decaying and increasing phases of the graph: (1) Exponential stability in the nominal case, that is a guarantee on the decay rate if the controller is never skipped. (2) A Lipschitz bound on the dynamics, which effectively bounds the growth rate during drop-outs of the controller. Using only these parameters, however, may come at the cost of increased pessimism.

In this paper, we present a framework that generalizes these benefits to time-varying execution conditions: We show that the above requirements yield a \emph{one-dimensional convergence rate abstraction} of the system dynamics. Consequently, the stability of the original system can be shown by proving the stability of the corresponding abstract system. We introduce this concept and its generalization to wider classes of uncertainties, such as disturbance and uncertain input/output timing. Moreover, in this paper, we discuss alternative Lyapunov-based abstractions to reduce analysis pessimism.

\newpage

\tableofcontents

\newpage

\section{Notation}
\begin{itemize}
	\item The notation $a:=b$ denotes that $a$ is defined as equal to $b$.
	
	\item $\R$ is the set of real numbers, $\N:=\{1,2,\dots\}$ the natural numbers, $\N_0:=\N \cup \{0\}$, and $\Z := \{\dots,-2,-1,0,1,2,\dots\}$ the set of integers.
	
	\item If not stated differently, $k$ refers to any $k \in \N_0$, and $x$ to $x \in \R^n$.
	
	\item The abbreviation $a_k \equiv 0$ denotes $a_0=a_1=\dots=0$.
	
	\item For $x \in \R^n, A \in \R^{n \times n}$, $x^\transp$ denotes the transpose, $|x|:=\sqrt{x^\transp x}$, and
	\begin{equation}
		\|A\|_2:=\max_{x \in \R^{n}\setminus\{0\}} \frac{|Ax|}{|x|}
	\end{equation}
	 denotes the spectral norm.
	
	\item $\mathcal B_v \coloneq \lbrace x \in \R^{n} | ~ |x| \le v \rbrace$ is the unit ball of radius $v$.
	
	\item $\lambda_i\{A\}$, $i=1,\dots,n$ are the eigenvalues of $A$ in arbitrary order.
	
	\item Rounding towards negative infinity (floor function) is defined as
	\begin{equation}
	\lfloor x\rfloor := \max \{ n \in \Z ~|~ n \leq x \},
	\end{equation}
	\eg, $\lfloor 1.9 \rfloor = \lfloor 1 \rfloor = 1$.
	The modulo operation is thereby defined as
	\begin{equation}
		x \bmod y := x - \left \lfloor \frac{x}{y} \right \rfloor y.
	\end{equation}
\end{itemize}

\section{Problem Statement for Weak Execution} \label{sec:abstraction:problem}
In this section, we formalize the problem of control stability under weakly-hard execution and introduce the concept of convergence rate abstractions by a simplified example.

\subsection{Problem Setting}
\paragraph{Given} The states of plant and controller are combined into one state vector $x_k$. Then, the closed loop dynamics are given as
\begin{equation}
	x_{k+1}=A_{\sigma_k}(x_k) + w_k, \quad \sigma_k \in \Sigma, \quad x_0 \in \mathbb R^{n}, \label{eq:abstraction-orig-system}
\end{equation}
where $\sigma_k=0$ means that the controller is executed normally in the $k$-th step and $\sigma_k \neq 0$ means the execution differs from the normal case. $\Sigma \supseteq \{0\}$ is the set of possible execution modes. For example, $\Sigma=\{0,1\}$ describes the simple case where the controller is either completely skipped ($\sigma=1$) or executed ($\sigma=0$). The disturbance $w_k$ is present mainly for the derivations; it will be assumed as zero in most of the resulting stability criteria.

In the general nonlinear case, $A_{\sigma}(x)=f(\sigma, x)$ is a function. For the linear case, we slightly abuse notation and define it as the matrix multiplication $A_{\sigma}(x)\coloneqq A_{\sigma} \cdot x$ with $A_{\sigma} \in \R^{n \times n}$, which means that the braces may be omitted.

Two assumptions are used throughout this section: Lipschitz continuity and nominal exponential stability, as defined in the following.

\begin{assumption}[Lipschitz Continuity] \label{assumption:lipschitz}
	 $A_\sigma(x)$ is continuously differentiable and Lipschitz continuous in $x$, \ie, there is a constant $L \geq 0$ such that
	\begin{equation}
	|A_{\sigma}(x + \Delta x) - A_{\sigma}(x)| \le L |\Delta x| \quad \forall x, \Delta x \in \R^n, \forall \sigma \in \Sigma. \label{eq:lipschitz}
	\end{equation}
	For the linear case, $L=\max_{\sigma \in \Sigma} \|A_{\sigma}\|_2$.
\end{assumption}

\begin{assumption}[Nominal Exponential Stability] \label{assumption:nominal-exp}
For the dynamics \eqref{eq:abstraction-orig-system}, there exist an overshoot factor $\alpha\geq1$ and a growth rate $\rho \in ( 0 ; 1)$ such that
\begin{equation}
	(\forall k \in \N_0: \quad \sigma_k =0 \land w_k=0) \quad \Rightarrow \quad \left( \forall x_0 \in \R, k\in \N_0:\quad |x_k| \le \alpha \rho^k |x_0| \right). \label{eq:abstraction-nominal-exp-stability}
\end{equation}
\end{assumption}

\subsection{Goal}
The goal is to reduce the closed-loop dynamics to a one-dimensional \emph{abstract system}, \ie,
\begin{equation}
	v_{k+1} = \rho_{\sigma_k} v_k + \beta |w_k|,
\end{equation}
with the one-dimensional state $v_k \in [0,\infty)$ that satisfies the \emph{abstraction guarantee}
	\begin{equation}
	|x_k| \le v_k \quad \forall k \geq 0
	\end{equation}
and describes the original system well enough to show the desired stability and quality (cf.~\cref{fig:intro} on p.~\pageref{fig:intro}). Before we formalize this concept in \cref{sec:formalization}, we will first discuss how it naturally arises from exponential stability.

\subsection{Exponential Stability as a Simple Abstraction}

Exponential stability can be seen as an abstraction. First, we will revisit the derivation of exponential stability for linear time-invariant systems.

\begin{remark}[Exponential Stability for Linear Time-Invariant Systems]
To determine a valid combination of $\rho$ and $\alpha$ fulfilling \eqref{eq:abstraction-nominal-exp-stability}, first choose any $\rho$ such that
\begin{equation}
	\rho > \max_i |\lambda_i\{A_0\}| \quad \land \quad \rho<1. \label{eq:exp-stable-linear:rho}
\end{equation}
Then, as we will shown soon, exponential stability \eqref{eq:abstraction-nominal-exp-stability} holds if and only if
\begin{equation}
	\alpha \geq \alpha_{\min} := \max_{k \in \{0,1,\dots,\tilde k-1\}} \left\|A_0^k \rho^{-k}\right\|_2,
	\label{eq:exp-stable-linear:alpha-first}
\end{equation}
where the above maximum needs to be evaluated up to the finite number
\begin{equation}
\tilde k := \min \left \{ k \in \N \Big|  \left\|A_0^{ k} \rho^{- k}\right\|_2 < 1 \right \} < \infty. \label{eq:exp-stable-linear:k-tilde}
\end{equation}
Both $\tilde k$ and $\alpha_{\min}$ can therefore be evaluated numerically.
\end{remark}
\begin{proof}~
\paragraph*{Value of $\rho$:} While the upper bound of \eqref{eq:exp-stable-linear:rho} is obvious, a detailed proof for the lower bound would be beyond the scope of this paper. It is closely linked to the definition and properties of the spectral radius \cite{Jungers2009}.

	Note that the edge case of $\rho = \max_i |\lambda_i\{A_0\}|$ is possible for some systems, \eg, the one-dimensional case $A_0=\rho$, but not in general: For example, let
\begin{equation}
A_0=\mat{0 & 0\\ 1 & 0}, ~x_0=\mat{1\\0}.
\end{equation}
Here it is impossible to choose $\rho=\max_i |\lambda_i\{A_0\}|=0$ because $|x_1|=1 \not \leq \alpha \rho^k |x_0| = 0$. This is closely related to the question of Reducibility and the existence of Extreme Matrix Norms discussed in \cite{Jungers2009}. 

\paragraph*{Value of $\alpha$:} The range of $\alpha$ satisfying exponential stability \eqref{eq:abstraction-nominal-exp-stability} with the chosen value of $\rho$ can then determined by rewriting \eqref{eq:abstraction-nominal-exp-stability}. In the following, $\forall x_0,k$ is shorthand for $\forall x_0 \in \R, k\in \N_0$.
\allowdisplaybreaks
\begin{align}
	\text{\eqref{eq:abstraction-nominal-exp-stability}} \Leftrightarrow & \left( \forall x_0, k:\quad |A_0^kx_0| \le \alpha \rho^k |x_0| \right) \\
	\Leftrightarrow&
	\left( \forall x_0, k:\quad \begin{lrdcases}
	0 \le 0, & x_0 = 0\\
	\displaystyle \frac{|A_0^kx_0|}{\rho^k |x_0|} \le \alpha , & x_0 \neq 0
	\end{lrdcases} \right) \\
	\stackrel{\rho>0}{\Leftrightarrow}& \left( \forall x_0 \neq 0, k:\quad \alpha \ge \frac{|A_0^k \rho^{-k} x_0|}{|x_0|} \right) \label{eq:exp-stable-linear:alpha-temp}\\
	\Leftrightarrow& \alpha \stackrel{(*)}{\geq} \max_{k \in \N_0}\left(\max_{x_0 \in \R^{n}\setminus\{0\}} \frac{\left|A_0^k \rho^{-k} x_0\right|}{|x_0|}\right) = \max_{k \in \N_0} \left\|A_0^k \rho^{-k}\right\|_2 \stackrel{(**)}{=} \underbrace{\max_{k \in \{0,1,\dots,\tilde k-1\}} \left\|A_0^k \rho^{-k}\right\|_2}_{\alpha_{\min}}
\end{align}

The statement $(**)$ will be proven later. In $(*)$ we silently assume that the maximum over the considered infinite sets exists, \ie, it is neither infinite nor merely a supremum. The validity of this assumption with regard to $x_0$ becomes clear from the definition of the spectral norm as the maximum (not supremum)
\begin{equation}
\|A\|_2 = \max_{x \in \R^{n}\setminus\{0\}} \frac{|Ax|}{|x|}.
\end{equation} Regarding $k$ it can be seen from $(**)$, which results in the maximum over a \emph{finite} set.

\paragraph{Existence of $\tilde k < \infty$:}
A finite $\tilde k$ exists because
\begin{equation}
	\lim_{k\to\infty} A_0^k \rho^{- k} =\lim_{k\to\infty} \big(\underbrace{A_0 \rho^{-1}}_{\mathrlap{\hspace{-1.4em}|\lambda_i\{\cdot\}| = |\rho^{-1} \lambda_i\{A_0\}| \stackrel{\text{\eqref{eq:exp-stable-linear:rho}}}{<} 1 \quad \forall i }}\big)^k = 0
\end{equation}
and therefore, by the epsilon-delta-definition of a limit, for any $\epsilon > 0$ there is a corresponding $\delta(\epsilon)<\infty$ such that
\begin{equation}
	\|A_0^k \rho^{- k}\|_2 < \epsilon \quad \forall k \geq \delta(\epsilon). \label{eq:exp-stable-linear:k-infty-lim-delta}
\end{equation}
Choose $\epsilon=1$ to see that by \cref{eq:exp-stable-linear:k-tilde,eq:exp-stable-linear:k-infty-lim-delta} there exists $\tilde k \leq \delta(1) < \infty$.

\paragraph*{Value of $\tilde k$:}
Consider an arbitrary $k \geq 0$. Let $n\geq 0$ be such that $n\tilde  k \leq k < (n+1) \tilde k$.  By the submultiplicativity $\|AB\|_2 \leq \|A\|_2\|B\|_2$, we can split off $n$ factors $\|A_0^{\tilde k} \rho^{\tilde k}\|_2 < 1$ from the term $\|A_0^{k} \rho^{k}\|_2$:
	\begin{align}
\left\|A_0^k \rho^{-k}\right\|_2 \leq & \left\|A_0^{\tilde k} \rho^{-\tilde k}\right\|_2^n \left\|A_0^{k - n\tilde k} \rho^{-(k - n\tilde k)}\right\|_2\\ \stackrel{\text{\eqref{eq:exp-stable-linear:k-tilde}}}{\leq} & \left\|A_0^{k - n\tilde k} \rho^{-(k - n\tilde k)}\right\|_2\\
  \stackrel{\mathclap{(***)}}{\leq} &\max_{k \in \{0,1,\dots,\tilde k-1\}} \left\|A_0^k \rho^{-k}\right\|_2, \label{eq:exp-stable-linear:k-tilde-tmp}
	\end{align}
		where $(***)$ holds due to $0 \leq k-n\tilde k < \tilde k$.
This proves $(**)$:
\begin{align}
		\max_{k \in \N_0} \left\|A_0^k \rho^{-k}\right\|_2 =
		& \max\left\lbrace \max_{k \in \{0,1,\dots,\tilde k-1\}} \left\|A_0^k \rho^{-k}\right\|_2, ~ \max_{k \in \{\tilde k, \tilde k+1, \dots\}} \left\|A_0^k \rho^{-k}\right\|_2 \right\rbrace\\\stackrel{
		\mathclap{\text{\eqref{eq:exp-stable-linear:k-tilde-tmp}}}}{=}&
		\max_{k \in \{0,1,\dots,\tilde k-1\}} \left\|A_0^k \rho^{-k}\right\|_2.
\end{align}
\end{proof}

The analysis of exponential stability leads to a first abstraction:

\begin{theorem}[Exponential Decay of Disturbance in the Nominal Case] \label{thm:exp-decay}
Let $\sigma_i \equiv 0$ and \cref{assumption:lipschitz,assumption:nominal-exp} hold. Then, there exists a constant $\beta \geq 1$ such that the resulting system
\begin{equation}
	x_{k+1} = A_0(x_k) + w_k \label{eq:abstraction-reformulated}
\end{equation}
obeys
\begin{equation}
	|x_{k}| \le v_{k} \forall k \label{eq:abstraction-guarantee}
\end{equation}
with
\begin{equation}
	v_{k+1} = \rho v_k + \beta |w_k|, \quad v_0 = \alpha |x_0|. \label{eq:abstraction-dynamics}
\end{equation}

This is a simple example for a one-dimensional \emph{convergence rate abstraction}: The one-dimensional $v$-system summarizes the relevant information about stability of the $n$-dimensional $x$-system. A general definition of this concept will be given later.
\end{theorem}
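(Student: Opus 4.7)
The plan is to decompose $x_k$ into the undisturbed nominal trajectory from $x_0$ plus a telescoping correction that isolates the effect of each disturbance $w_j$, then bound each piece using \cref{assumption:lipschitz,assumption:nominal-exp}. Let the $k$-step nominal map be $\Phi_k$, with $\Phi_0(x)=x$ and $\Phi_{k+1}(x)=A_0(\Phi_k(x))$; \cref{assumption:nominal-exp} applied to an arbitrary initial condition then reads $|\Phi_k(x)|\le\alpha\rho^k|x|$ for all $x,k$. Introducing the ``restart'' trajectories $y^{(j)}_k\coloneq\Phi_{k-j}(x_j)$ (note $y^{(0)}_k=\Phi_k(x_0)$ and $y^{(k)}_k=x_k$), telescoping gives
\begin{equation*}
x_k = \Phi_k(x_0) + \sum_{j=0}^{k-1}\bigl(\Phi_{k-j-1}(A_0(x_j)+w_j)-\Phi_{k-j-1}(A_0(x_j))\bigr),
\end{equation*}
so $|x_k|\le\alpha\rho^k|x_0|+\sum_{j=0}^{k-1}\delta_{k-j-1}(w_j)$, where $\delta_i(w)\coloneq|\Phi_i(y+w)-\Phi_i(y)|$ for the relevant intermediate state $y=A_0(x_j)$.

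In the linear case each summand reduces to $A_0^{k-j-1}w_j$, so the preceding remark's bound $\|A_0^i\|_2\le\alpha\rho^i$ yields $\delta_i(w)\le\alpha\rho^i|w|$; hence $\beta\coloneq\alpha\ge1$ works, and $|x_k|\le\alpha\rho^k|x_0|+\alpha\sum_{j=0}^{k-1}\rho^{k-j-1}|w_j|$ matches exactly the closed-form solution of the recursion $v_{k+1}=\rho v_k+\beta|w_k|$ with $v_0=\alpha|x_0|$, establishing \eqref{eq:abstraction-guarantee}.

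The main obstacle is the general nonlinear case, where iterated \cref{assumption:lipschitz} only gives $\delta_i(w)\le L^i|w|$, which does not match the required rate $\rho^i$ if $L>\rho$. I would resolve this with a two-phase argument that combines Lipschitz over a short horizon with the contraction from \cref{assumption:nominal-exp} over longer horizons: choose the smallest $N$ such that $\alpha\rho^N\le\rho$ (finite by an argument analogous to the existence of $\tilde k$ in \eqref{eq:exp-stable-linear:k-tilde}), use Lipschitz to propagate each perturbation for up to $N$ steps (contributing a factor $L^N$), and then use the $\rho^{i-N}$ contraction of $\Phi_{i-N}$ to absorb the remaining evolution. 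This yields a mixed bound of the form $\beta\rho^i|w|$ with $\beta$ of order $\alpha(L/\rho)^{N-1}$, from which the abstraction guarantee follows as in the linear case. The delicate step is applying \cref{assumption:nominal-exp} to control the difference of two perturbed trajectories after the Lipschitz phase, since the assumption bounds trajectory magnitudes rather than differences; introducing a surrogate norm $V(x)\coloneq\sup_{k\ge0}|\Phi_k(x)|/\rho^k$ (which satisfies $|x|\le V(x)\le\alpha|x|$ and $V(A_0(x))\le\rho V(x)$ directly from \cref{assumption:nominal-exp}) may be the cleanest way to close this gap, reducing the task to establishing a (perhaps local) Lipschitz estimate for $V$ in terms of $L,\alpha,\rho$.
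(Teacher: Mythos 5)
Your linear-case argument is correct and is essentially the paper's proof: unroll the recursion, use superposition, apply $\|A_0^i\|_2\le\alpha\rho^i$ termwise, and take $\beta=\alpha$. No issues there.

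The nonlinear case is where the genuine gap lies, and you partially see it yourself. The telescoping identity is fine, but the bound you need, $\delta_i(w)=|\Phi_i(y+w)-\Phi_i(y)|\le\beta\rho^i|w|$, is an \emph{incremental} decay estimate, while \cref{assumption:nominal-exp} only bounds trajectory magnitudes relative to the origin. Your two-phase argument does not close this: after $N$ Lipschitz steps you hold two states whose \emph{difference} is at most $L^N|w|$, but to propagate that difference through $\Phi_{i-N}$ with a factor $\alpha\rho^{i-N}$ you would need precisely the incremental version of \cref{assumption:nominal-exp}; applying the assumption to each trajectory separately only gives $|\Phi_i(y+w)-\Phi_i(y)|\le\alpha\rho^{i-N}\bigl(|\Phi_N(y+w)|+|\Phi_N(y)|\bigr)$, which is proportional to the state rather than to $|w|$ and therefore cannot feed the $\beta|w_k|$ term of \eqref{eq:abstraction-dynamics}. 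So the claimed mixed bound $\beta\rho^i|w|$ with $\beta\sim\alpha(L/\rho)^{N-1}$ is not established. Your surrogate norm $V(x)=\sup_{k\ge0}|\Phi_k(x)|/\rho^k$ does have the sandwich and decay properties you state, but its global Lipschitz continuity with a finite constant is exactly the crux: the naive estimate gives $\sup_k(L/\rho)^k|\Delta x|$, which diverges for $L>\rho$, and exponential stability of the origin plus a global Lipschitz dynamics does not in general imply incremental exponential stability for nonlinear systems.

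For what it is worth, the paper does not close this gap either: its nonlinear argument is explicitly labeled a proof sketch, and it obtains the missing ingredient --- a function $V$ satisfying \eqref{eq:abstraction-nonlin-bound}, \eqref{eq:abstraction-nonlin-rho} and, crucially, the global Lipschitz bound \eqref{eq:abstraction-nonlin-delta} --- from a converse Lyapunov theorem (citing Bof et al.), using the continuous differentiability required in \cref{assumption:lipschitz}. Your surrogate $V$ is essentially the same object as the paper's $V=\sqrt{\tilde V}$, so you have identified the right missing lemma; but as written, your proposal asserts rather than proves it, and the two-phase Lipschitz/contraction shortcut that is supposed to substitute for it would fail for the reason above.
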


\begin{proof}[Proof for Linear Systems]

Unrolling the recursion \eqref{eq:abstraction-reformulated} yields
\begin{align}
x_k &= A_0^{k} x_0  + \sum_{i=0}^{k-1} A_0^{i} w_{k-1-i}\\
\Rightarrow |x_k| &\le |A_0^k x_0 | +  \sum_{i=0}^{k-1} |A_0^{i} w_{k-1-i}|. \label{eq:abstraction-reformulated-proof-lin-1}
\end{align}
It should be noted that this step explicitly requires the superposition property and is not directly applicable to nonlinear systems. For linear systems, the exponential stability
\eqref{eq:abstraction-nominal-exp-stability} can be concretized to
\begin{equation}
\forall x \in \R^n, i \in \N_0:\quad  |A_0^{i} x| \le \alpha \rho^i |x|.
\end{equation}
Applying this to each summand of \eqref{eq:abstraction-reformulated-proof-lin-1} yields
\begin{align}
|x_k| &\le \alpha \rho^k |x_0| +  \sum_{i=0}^{k-1} \alpha \rho^i |w_{k-1-i}|. \label{eq:abstraction-exp-linear-bound}
\end{align}
Explicitly solving the abstraction's dynamics \eqref{eq:abstraction-dynamics} yields
\begin{align}
v_k &= \alpha \rho^k |x_0| + \sum_{i=0}^{k-1} \beta \rho^i |w_{k-1-i}|,
\end{align}
which shows that $|x_k| \le v_k$ for any choice of $\beta \geq \alpha$.

\end{proof}

\begin{proof}[Proof Sketch for Nonlinear Systems]
For $\sigma_i \equiv 0$, the system can be written as
\begin{equation}
x_{k+1}=f(x_k) + w_k \quad \text{with} \quad f(x) \coloneqq A_0(x)
\end{equation}
Due to assumption \eqref{eq:abstraction-nominal-exp-stability}, it is exponentially stable with coefficients $\rho$ and $\alpha$.

(\emph{Note:} The following step is only a sketch and not yet a rigid proof.)
Using a converse Lyapunov theorem similar to \cite[Theorem 2.7]{Bof2018}\footnote{The cited theorem only considers a bounded set for $x$, whereas we consider unbounded $x$ to simplify the notation. However, the argument should still hold due to our explicit requirement of Lipschitz continuity. Additionally, if stability is shown, then the set is guaranteed to be bounded.}, it can \emph{probably} be shown that exponential stability with parameters $\rho$ and $\alpha$, Lipschitz continuity and  continuous differentiability are sufficient for the existence of a \enquote{square-like} Lyapunov function $\tilde V(x)$ and a constant $\beta \geq \alpha$ such that
\begin{subequations}
	\begin{align}
	\forall x:&& |x|^2 \leq \tilde V(x) \leq& \alpha^2 |x|^2\\
	\forall x:&& \tilde V(f(x)) \leq& \rho^2 \tilde V(x) \\
	\forall x,\Delta x:&& \tilde V(x+\Delta x) \leq&  \tilde V(x) + \beta^2 |\Delta x|^2,
	\end{align}
\end{subequations}
where $\beta \geq \alpha$ can be derived from the Lipschitz bound on $f(x_k)$.

Taking the square root of the above equations and applying
\begin{equation}
	\sqrt{a+b} \le \sqrt{a} + \sqrt{b} \quad \forall a,b \ge 0
\end{equation}
shows that there exists a \enquote{linear-like} Lyapunov function $V(x)=\sqrt{\tilde V(x)}$ such that
\begin{subequations}
	\begin{align}
	\forall x:&& \quad |x| \leq V(x) \leq& \alpha |x| &&\text{(linearly bounded)} \label{eq:abstraction-nonlin-bound} \\
	\forall x:&& \quad V(f(x)) \leq& \rho V(x) &&\text{(exponentially converging)} \label{eq:abstraction-nonlin-rho} \\
	\forall x, \Delta x:&& \quad V(x+\Delta x) \leq&  V(x) + \beta |\Delta x| &&\text{(globally Lipschitz continuous)} \label{eq:abstraction-nonlin-delta}
	\end{align}
\end{subequations}

The criteria \labelcref{eq:abstraction-nonlin-bound,eq:abstraction-nonlin-delta,eq:abstraction-nonlin-rho} lead to
\begin{align}
V(x_{k+1}) = V(f(x_k) + w_k) \stackrel{\text{\eqref{eq:abstraction-nonlin-delta}}}{\leq} \beta |w_k| + V(f(x_k)) \stackrel{\text{\eqref{eq:abstraction-nonlin-rho}}}{\leq} \beta |w_k| + \rho V(x_k). \label{eq:abstraction-nonlin-lyap-dynamics}
\end{align}
The dynamics  \eqref{eq:abstraction-dynamics} of the abstraction $v_k$ now track the upper bound for $V(x_k)$ arising from  $V(x_0) \stackrel{\text{\eqref{eq:abstraction-nonlin-bound}}}{\leq} \alpha |x_0|$ and \eqref{eq:abstraction-nonlin-lyap-dynamics}. Therefore,
\begin{equation}
V(x_k) \stackrel{\text{\eqref{eq:abstraction-nonlin-bound} $\land$ \eqref{eq:abstraction-nonlin-lyap-dynamics} $\land$ \eqref{eq:abstraction-dynamics}}}{\leq} v_k \label{eq:abstraction-nonlin-lyap-bound-vk}
\end{equation}
which ensures the abstraction guarantee \eqref{eq:abstraction-guarantee} by
\begin{equation}
|x_k|  \stackrel{\text{\eqref{eq:abstraction-nonlin-bound}}}{\leq} V(x_k) \stackrel{\text{\eqref{eq:abstraction-nonlin-lyap-bound-vk}}}{\leq} v_k \quad \Rightarrow \quad \text{\eqref{eq:abstraction-guarantee}}.
\end{equation}

\end{proof}

\begin{remark}[Lyapunov Function for a Linear System]
	\label{remark:lyapunov-linear}
	To give a simple example for the Lyapunov function $V(x)$ in the above proof for nonlinear systems, let $f(x)=Ax$. Then, there always exists a Lyapunov function in quadratic form $\tilde V(x)=x^{\transp}Px$, for which the square root $V(x)=\sqrt{x^{\transp}Px}$ fulfills the criteria \labelcref{eq:abstraction-nonlin-bound,eq:abstraction-nonlin-delta,eq:abstraction-nonlin-rho} leading to the abstraction $v_k \geq V(x)$. $P$ is the solution of the Lyapunov equation
	\begin{equation}
	A^\transp P A - P = -Q
	\end{equation}
	with the positive definite parameter $Q$. (Note that in the corresponding MATLAB command \texttt{dlyap}, $A$ is transposed.) As $P$ is positive definite, the Cholesky decomposition $P=R^{\transp}R$ leads to the equivalence $V(x)=\sqrt{x^{\transp}R^{\transp}Rx} = |Rx|$. The transformation $\tilde x \coloneqq Rx$ then describes new coordinates in which the system is contractive ($|\tilde x_{k+1}| \le |\tilde x_k|$ if $w_k=0$). \Cref{fig:contractive_system} provides an example for a dampened harmonic oscillator.
\end{remark}

\begin{figure}[t]
\begin{centering}
	\input{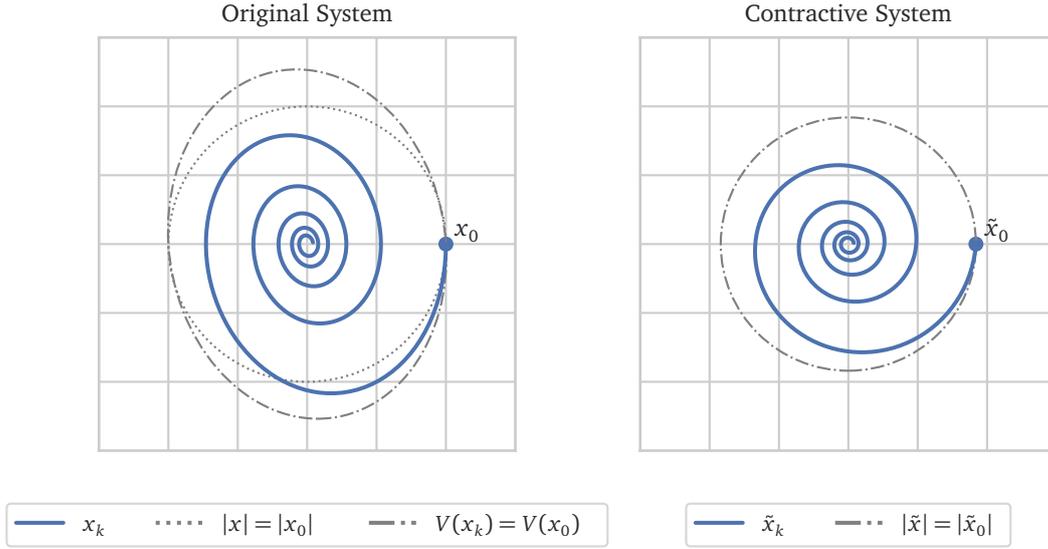}
\end{centering}
\caption{Comparision of the original trajectory $x_k$ (left) and the transformed contractive one $\tilde x_k$ (right) for a stable harmonic oscillator. While the Euclidean norm $|x_k|$ overshoots the initial norm $|x_0|$, the system is contractive in $|\tilde x_k|$. }
\label{fig:contractive_system}
\end{figure}

\subsection{Existence and Pessimism of Simple Abstractions}
For the variant of abstraction provided by \cref{thm:exp-decay}, it is possible to show that it tightly matches stability of the nominal case, but may be pessimistic for weakly-hard execution:

\begin{theorem}[Existence of Convergence Rate Abstractions] Under \cref{assumption:lipschitz} (Lipschitz continuity), the exponential stability of the nominal system (\cref{assumption:nominal-exp} with $\rho<1$) is equivalent to the existence of a stable convergence rate abstraction of the form \eqref{eq:abstraction-guarantee} -- \eqref{eq:abstraction-dynamics} with the same convergence rate $\rho$.
\end{theorem}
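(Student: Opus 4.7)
The statement is an equivalence, so I would split it into the two implications and observe that one direction is already essentially done by \cref{thm:exp-decay}, while the other direction is a short direct consequence of specializing the abstraction to the nominal case.

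For the direction ``$\Rightarrow$'' (nominal exponential stability implies existence of the abstraction), the plan is to invoke \cref{thm:exp-decay} verbatim: under \cref{assumption:lipschitz,assumption:nominal-exp}, that theorem already produces a constant $\beta \geq 1$ and a one-dimensional system $v_{k+1} = \rho v_k + \beta |w_k|$, $v_0 = \alpha |x_0|$, satisfying the abstraction guarantee $|x_k| \leq v_k$ of the required form \eqref{eq:abstraction-guarantee}--\eqref{eq:abstraction-dynamics}. Stability of this abstract one-dimensional system is immediate from $\rho \in (0,1)$: the homogeneous part contracts at rate $\rho$, and for bounded disturbances $|w_k|$ the state $v_k$ stays bounded by the geometric series $\tfrac{\beta}{1-\rho} \sup_k |w_k|$. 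The convergence rate of the abstraction coincides with the $\rho$ of \cref{assumption:nominal-exp} by construction, which is the quantitative part of the claim.

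For the direction ``$\Leftarrow$'' (existence of a stable abstraction with rate $\rho$ implies nominal exponential stability with rate $\rho$), I would specialize the abstraction to the nominal setting by setting $w_k \equiv 0$ and $\sigma_k \equiv 0$. Then \eqref{eq:abstraction-dynamics} reduces to $v_{k+1} = \rho v_k$, so $v_k = \rho^k v_0 = \alpha \rho^k |x_0|$. Combining this with the abstraction guarantee $|x_k| \leq v_k$ yields $|x_k| \leq \alpha \rho^k |x_0|$ for all $k \geq 0$ and all $x_0 \in \R^n$, which is exactly the exponential stability requirement \eqref{eq:abstraction-nominal-exp-stability} with the same $\rho$ and $\alpha$. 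Note that this direction does not require \cref{assumption:lipschitz} at all, which matches the intuition that Lipschitz continuity is only needed to construct the abstraction (to get $\beta$), not to read off nominal stability from it.

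The main (and essentially only) obstacle I anticipate is a bookkeeping one rather than a mathematical one: making precise what ``of the form \eqref{eq:abstraction-guarantee}--\eqref{eq:abstraction-dynamics} with the same convergence rate $\rho$'' really means, in particular that the abstraction is allowed to depend on $x_0$ through an inequality $v_0 \leq \alpha |x_0|$ for some $\alpha \geq 1$ (so that the $\Leftarrow$ direction produces the overshoot constant of \cref{assumption:nominal-exp}). Once this is fixed, both implications are one-liners built on \cref{thm:exp-decay} and on plugging $w_k \equiv 0$ into the abstract dynamics, so no additional converse-Lyapunov machinery is needed beyond what the previous theorem already provided.
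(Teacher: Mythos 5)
Your proposal is correct and follows essentially the same route as the paper: the forward direction is a direct invocation of \cref{thm:exp-decay}, and the converse direction unrolls the abstraction dynamics with $w_k \equiv 0$, $\sigma_k \equiv 0$ to read off $|x_k| \le v_k = \rho^k \alpha |x_0|$, exactly as in the paper's proof. Your added observations (boundedness of the abstract state under bounded disturbance, and that Lipschitz continuity is only needed for the constructive direction) are correct but not needed beyond what the paper does.
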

\begin{proof}
\emph{\enquote{Exponentially stable \eqref{eq:abstraction-nominal-exp-stability} $\land$ Lipschitz $\Rightarrow$ convergence rate abstraction}:} This is implied by \cref{thm:exp-decay}.

\emph{\enquote{Convergence rate abstraction $\Rightarrow$ exponentially stable \eqref{eq:abstraction-nominal-exp-stability}}:} Assume $w_k \equiv 0$ and $\sigma_k \equiv 0$. The result follows from unrolling the recursion of the abstraction \eqref{eq:abstraction-guarantee} -- \eqref{eq:abstraction-dynamics}:

\begin{equation}
	 |x_k| \stackrel{\text{\eqref{eq:abstraction-guarantee}}}{\leq} v_k \stackrel{\text{\eqref{eq:abstraction-dynamics}},~ w_k \equiv 0}{=} \rho^{k} v_0 \stackrel{\text{\eqref{eq:abstraction-dynamics}}}{=} \rho^{k} \alpha |x_0|
\end{equation}
\end{proof}

\begin{remark}[Conservatism of Convergence Rate Abstractions under Disturbance]
	\label{remark:abstraction-weakly-hard:conservative}
	As shown by the previous theorem, exponential stability of the nominal case is captured \emph{exactly} by an abstraction, \ie, $\rho$ has the same value in the exponential stability \eqref{eq:abstraction-nominal-exp-stability} as  in the abstraction \eqref{eq:abstraction-dynamics}. However, \emph{for deviations from the nominal case ($w_k \neq 0$), this abstraction  \eqref{eq:abstraction-dynamics} is conservative} in general, as the $n$-dimensional state space of the original $x$-system cannot generally be embedded into the one-dimensional state space of the abstracted $v$-system. These deviations $w_k$ include physical disturbance and weakly-hard execution.

	A concrete example proving this statement will be given later.
\end{remark}

\section{Formal Definition of Convergence Rate Abstractions}\label{sec:formalization}
The structure of the abstraction given by \eqref{eq:abstraction-dynamics} can be generalized as follows:
\begin{definition}[One-Dimensional Convergence Rate Abstraction] \label{def:abstraction}
Let
\begin{equation}
	x_{k+1} = f(x_k, w_k), \quad k \in \N_0 \label{eq:def:growth-bound:orig}
\end{equation}
be the original system with state $x_k \in \R^n$ and partially unknown disturbance $w_k \in \mathcal{W} \subseteq \R^{n_w}$. Then, the one-dimensional system
\begin{equation}
v_{k+1} = \bar f(v_k, \bar w_k) \qquad  \text{(\eg, }v_{k+1} = \rho v_k + \beta \bar w_k\text{)} \label{eq:def:growth-bound:orig:dyn}
\end{equation}
with state $v_k \in [0, \infty)$ and incomplete disturbance information
\begin{equation}
	\bar w_k \in \bar G(w_k) \subseteq [0, \infty) \qquad  \text{(\eg, }\bar w_k \geq |w_k|\text{\,)} \label{eq:def:growth-bound:orig:w}
\end{equation}
 is a \emph{convergence rate abstraction} for \eqref{eq:def:growth-bound:orig} iff
\begin{itemize}
	\item the initialization guarantees
	\begin{equation}
		v_0 \ge \bar h(x_0) \qquad \text{(\eg, }v_0 \geq \alpha |x_0|\text{)}\label{eq:def:growth-bound:orig:init-guarantee}
	\end{equation}
	\item and the functions $\bar f, \bar G, \bar h$ guarantee
	\begin{equation}
	\forall x_0: \quad \left(v_0 \ge \bar h(x_0) \quad \Rightarrow \quad \left(\forall k \geq 0:~ |x_{k}| \le v_{k} \right) \right). \label{eq:def:growth-bound:orig:guarantee}
	\end{equation}
\end{itemize}

Note that the disturbance $w_k$ in this formulation does not only include physical disturbance but also the weakly-hard execution $\sigma_k$ and any other uncontrolled time-varying influence.
\end{definition}

The abstraction guarantee \eqref{eq:def:growth-bound:orig:guarantee} enables translating stability analysis of the abstraction to the original system:
\begin{theorem}
	If a system's convergence rate abstraction as per \cref{def:abstraction} is exponentially stable (\,$|v_k| < \alpha \rho^k v_0$\,) or has bounded state ($|v_k|<\text{const}$), the same holds for the original system.
\end{theorem}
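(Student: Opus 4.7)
The plan is to derive the result as a direct consequence of the abstraction guarantee \eqref{eq:def:growth-bound:orig:guarantee}, which already does all the heavy lifting. The essential observation is that the abstraction is, by construction, a pointwise upper bound on the norm of the original trajectory, so any uniform-in-$k$ bound on $v_k$ immediately transfers to $|x_k|$.

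Concretely, I would proceed as follows. First, fix an arbitrary initial state $x_0 \in \R^n$ and pick the initialization $v_0 \coloneqq \bar h(x_0)$, which is admissible because it trivially satisfies \eqref{eq:def:growth-bound:orig:init-guarantee} with equality. Next, for every admissible disturbance sequence $(w_k)_{k \geq 0}$ with $w_k \in \mathcal W$, choose a corresponding sequence $\bar w_k \in \bar G(w_k)$ feeding the abstraction \eqref{eq:def:growth-bound:orig:orig:dyn}; the definition guarantees this choice is always possible. Applying \eqref{eq:def:growth-bound:orig:guarantee} to this pairing yields
\begin{equation}
  |x_k| \le v_k \quad \forall k \geq 0.
\end{equation}

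The two claims then follow by monotonically propagating the hypothesized bounds through this inequality. In the bounded case, $v_k < c$ for some constant $c$ implies $|x_k| \le v_k < c$, so the original state is uniformly bounded. In the exponential case, $v_k < \alpha \rho^k v_0 = \alpha \rho^k \bar h(x_0)$ gives
\begin{equation}
  |x_k| \le v_k < \alpha \rho^k \bar h(x_0),
\end{equation}
which is an exponential decay estimate with rate $\rho$; if $\bar h$ admits a linear bound of the form $\bar h(x_0) \le \alpha_0 |x_0|$ (as in the prototypical case $\bar h(x_0) = \alpha |x_0|$ of \cref{thm:exp-decay}), this specializes to the classical form $|x_k| \le \alpha \alpha_0 \rho^k |x_0|$.

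There is no real obstacle here; the theorem is essentially a restatement of the \emph{purpose} of \cref{def:abstraction}. The only mild subtlety worth flagging is the correct handling of the disturbance: one must emphasise that the abstraction guarantee \eqref{eq:def:growth-bound:orig:guarantee} is quantified over \emph{all} admissible choices of $\bar w_k \in \bar G(w_k)$, so that the bound $|x_k| \le v_k$ is valid under the same disturbance realisation whose stability/boundedness property is assumed on the abstract side. Once this quantifier bookkeeping is in place, the argument reduces to a one-line chain of inequalities.
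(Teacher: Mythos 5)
Your argument is correct and follows essentially the same route as the paper, which likewise obtains the claim directly from the abstraction guarantee \eqref{eq:def:growth-bound:orig:guarantee} together with the initialization \eqref{eq:def:growth-bound:orig:init-guarantee}, giving $0 \le |x_k| \le v_k$ and propagating the assumed bound on $v_k$. Your extra remarks on the disturbance quantifiers and on $\bar h$ needing a linear bound to recover the classical exponential form are reasonable refinements but not a different proof.
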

\begin{proof}[Proof Sketch]
	This immediately follows from \labelcref{eq:def:growth-bound:orig:guarantee,eq:def:growth-bound:orig:init-guarantee}, as then $0 \le |x_k| \le v_k$.
\end{proof}

\paragraph{Connection to abstractions and reachable sets} To explain why the above concept is called an abstraction, we will first define this term. In informal words, abstraction means mapping the representation of a problem into a new representation that preserves desirable properties and is easier to handle \cite{Giunchiglia1992}.
Here, the original $x$-system is mapped to the abstract $v$-system, which preserves the upper bound $|x_k|<v_k$ and is easier to analyze because it is one-dimensional.

For the scope of this publication, a more concrete definition of an abstraction can be given using the view of Behavioral Systems Theory \cite[Chapter 1]{willems2013introduction}, where a system is described by the set of possible combinations of input and output trajectories.
From this point of view, the original system \eqref{eq:def:growth-bound:orig} is the set
\begin{equation}
	S := \left\{\left((x_0, x_1,\dots), (w_0,w_1,\dots)  \right)  ~\Big|~ x_i \in \R^n, w_i \in \mathcal{W}, \text{\eqref{eq:def:growth-bound:orig}} \right\}.
\end{equation}
We define that a system $\tilde S$ is an \emph{abstraction} of $S$ if and only if $\tilde S \supseteq S$, \ie, the abstracted system contains at least the original trajectories and possibly others.

Based on the relation $x_k \in \mathcal{B}_{v_k}$, the convergence rate abstraction $\bar f, \bar G$ can be written as the system
\begin{align}
	\tilde S := \Big\{\left((x_0, x_1,\dots), (w_0,w_1,\dots)  \right)  ~\big|~& x_k \in \mathcal{B}_{v_k} \text{with $\nu_k$ subject to \labelcref{eq:def:growth-bound:orig:init-guarantee,eq:def:growth-bound:orig:dyn,eq:def:growth-bound:orig:w}},& \nonumber\\
	& w_k \in \mathcal{W}, \forall k \in \N_0 &\Big\}.
\end{align}
This new system $\tilde S$ is an abstraction of $S$ according to the above definition, as the set of possible $\tilde x_k$ always contains the actual trajectory $x_k$. Equivalently, the ball $\mathcal{B}_{v_k}$ is an outer approximation of the reachable set of \eqref{eq:def:growth-bound:orig}.

\paragraph{Multi-Dimensional Generalization} The abstract state $v_k$ could be higher-dimensional if a general relation $x_k \in \bar S(v_k) \subseteq \R^{n}$ is used instead of  $|x_k|\le v_k$. For example, the $m$ components of $v_k$ could represent the $m$ slowest eigenmovements of a system. Constructing this $v$-system is a question of model order reduction with dynamic error bounds.

\section{Related Work and Contribution}
\label{sec:related-work}
The idea of abstracting a system to an upper bound of its state radius is implicitly contained in the notion of exponential stability, so it has existed in hidden form for centuries. In the following we would like to point out select examples from the literature in which the idea appeared more explicitly.

As discussed in \cref{sec:intro:solution}, a one-dimensional inequality for the state radius is used in \cite{Huang2019} to analyze stability of a nonlinear control loop under weak execution. This reduces the stability test to a system of linear inequalities. The test does not require explicit system dynamics, but only requires bounds on the nominal stability and Lipschitz constant.

A one-dimensional growth and perturbation bound for the open-loop plant is termed \emph{incremental forward completeness} in \cite{Zamani2012}. There, it is used to guarantee the correctness of a discrete-state controller determined from a state space discretization of the plant. In contrast to our work, the convergence rate only abstracts the open-loop plant, so that an unstable plant will result in an unstable convergence rate even if the closed loop is stable. However, it should be possible to transfer the results given in terms of Input-to-State Lyapunov functions to closed-loop analysis. A similar but full-dimensional convergence rate is used in \cite[Section VIII.C]{Reissig2017}.

In \cite[Chapters 6.1, 6.2.2]{Bund2017} networked control systems with packet loss are modeled in an abstract \emph{interval domain} by means of a magnitude impulse response. To some extent this corresponds to the impulse response of our one-dimensional abstraction. However, \cite{Bund2017} employs an explicit formulation of the linear dynamics to generate this impulse response, resulting in a non-exponentially decaying response, which corresponds to an abstract system of order higher than one. The concept is extended to \emph{signal densities}, which represent information on the temporal distribution of signals, \eg, short-term spikes vs. long-term persistent disturbance.

The idea of abstracting the state space to certain sets, especially ellipsoids, is widely used for the formal verification of computer programs and dynamic systems in general \cite{Roux2012}. In some sense, our approach results in a radial abstraction, as it yields spherical sets $|x_k| \le \mathrm{const}$. If a quadratic form is used for the Lyapunov function, as in \cref{remark:lyapunov-linear}, then our approach resembles an ellipsoid abstraction multiplied by a constant (cf. \cref{eq:abstraction-nonlin-bound}) corresponding to the worst-case ratio between the ellipsoid and a sphere. This will be detailed later.

A similar ellipsoid abstraction appears in some variants of Tube Model Predictive Control, \eg, \cite[Section III-B]{Cannon2011}, to describe the tube of possible disturbed trajectories $x(t)$ around the disturbance-free nominal trajectory $\tilde x(t)$. This tube
\begin{equation}
	(x(t) - \tilde x(t))^\transp Q\, (x(t) - \tilde x(t)) \le r^2(t)
\end{equation}
has ellipsoid cross section with fixed shape matrix $Q \in \R^{n \times n}$ but varying size $r(t) \ge 0$. Therefore, the dynamics of $r(t)$ are a one-dimensional abstraction of the influence of disturbance.

More generally, the concept of stability analysis by reduction to a simple dynamical system is formalized by \emph{comparison theory} and the notion of \emph{stability preserving mappings} \cite{Michel2001}. To illustrate the idea in the terms of \cref{def:abstraction}, consider a mapping $V(x)$ from $\R^{n}$ to $[0, \infty)$, similar to a Lyapunov function. We require the bounds
\begin{equation}
	0 \le |x| \le V(x) \le c |x| \qquad \forall x \label{eq:related-work:comparison-state-bound}
\end{equation}
and a worst-case dynamics bound
\begin{equation}
	0 \le V(f(x_k, w_k)) \le \tilde f(V(x_k), w_k) \qquad \forall x_k, w_k. \label{eq:related-work:comparison-bound}
\end{equation}
Then, a one-dimensional \emph{comparison system} is given by the difference inequality
\begin{equation}
	0 \le v_{k+1} \le \tilde f(v_k, w_k), \quad v_0=V(x_0). \label{eq:related-work:comparison-sys}
\end{equation}
This nondeterministic $v_k$-system generates all trajectories for $V(x_k)$ that are possible according to the bound \eqref{eq:related-work:comparison-bound}. Consequently, every trajectory for $V(x_k)$ resulting from the actual $x_k$-dynamics is contained in the set of trajectories of $v_k$. Therefore, if the $v_k$-system is stable (resp. bounded), then $V(x_k)$ converges (is bounded) and by \eqref{eq:related-work:comparison-state-bound} the same holds for $x_k$.

In summary, stability (boundedness) of the comparison system implies stability (boundedness) of the original system. The converse is not generally true because $v_k$ may grow faster than $V(x_k)$ whenever the upper bound of \eqref{eq:related-work:comparison-bound} is pessimistic. These results are generalized in \cite[Proposition 4.1.3]{Michel2001}.

In the special case in which the stability of a comparison system is equivalent to the stability of the original system, $V(x)$ is a \emph{stability preserving mapping}. Convergence rate abstractions are an upper-bound variant of such mappings, as formalized in \cite[Theorem 3.4.1]{Michel2001}. The idea of choosing $V(x)$ based on a Lyapunov function will be revisited later. Note that as discussed in \cref{thm:exp-decay}, it is also possible to obtain an abstraction without explicitly determining $V(x)$. In this case, there is no explicit connection to stability-preserving mappings.

These examples from the literature show that the basic idea of one-dimensional abstractions has existed for a long time and in wide a variety of forms. The contribution intended by this paper is twofold: First, we present a single formalism tailored for weakly-hard execution which embraces these ideas from various fields. Second, this formalism separates the analysis of the system dynamics from the subsequent analysis of weakly-hard stability:

Currently, the method chosen for analyzing the dynamics (\eg, robust exponential stability or Lyapunov function synthesis based on various techniques) is often hard-coded in the weakly-hard stability analysis (\eg, exponential stability under $(m,K)$-execution, maximum state under disturbance, design of on-line scheduling). While, as in computer programming, hard-coding may allow for some benefit, it severely hurts reuse and understanding. For example, a LMI-based approach to on-line scheduling for linear systems may be impossible to adapt to nonlinear systems as LMI methods are typically restricted to linear systems. In contrast, abstractions provide for a clean interface which facilitates reuse and understanding, though at the cost of some pessimism.

\section{Application to General Weak Execution} \label{sec:general-weak}
To show the applicability of the convergence rate abstractions, we return to the particular case of weakly-hard execution discussed in \cref{sec:abstraction:problem}.

\begin{definition}[Linear Convergence Rate Abstraction for Weak Execution]
	\label{def:linear-weak-abstraction}
	In both this and the next section, we consider an abstraction of the form
	\begin{align}
	\bar v_{k+1} &= \rho_{\sigma_k} \bar v_k + \beta |w_k|, & \bar v_0 &= \alpha |x_0|
	\label{eq:abstraction-vbar-dynamics}
	\end{align}
	that guarantees
	\begin{equation}
	\forall k \geq 0: \quad |x_k| \le \bar v_k. \label{eq:abstraction-vbar-guarantee}
	\end{equation}
\end{definition}

The following subsections show how to determine appropriate parameters $\rho_{i}, \alpha, \beta > 0$ either from robust exponential stability (\cref{sec:weak:simple-abstraction}) or from Lyapunov functions of the nominal case (\cref{sec:weak:lyapunov-abstraction}). Note that an implementation does not need to compute these parameters or the abstraction state $v_k$ exactly. Instead, any upper approximation is also possible because it preserves the abstraction guarantee \eqref{eq:abstraction-vbar-guarantee}.

\subsection{Simple Robustness-Based Abstraction}\label{sec:weak:simple-abstraction}
To derive an abstraction based on robustness of the nominal case, we come back to the specific case discussed in \cref{sec:abstraction:problem} and its abstraction to $v$ (not $\bar v$) by \cref{eq:abstraction-dynamics,eq:abstraction-guarantee}. The results are equally valid for the nonlinear case or any modified setting, as long as \cref{eq:abstraction-dynamics,eq:abstraction-guarantee} are satisfied.

\begin{theorem}[Robustness-Based Abstraction]
	Let $\gamma_{i} \geq 0$, $i \in \Sigma$, be
	\begin{equation}
	\gamma_i=\|A_{i}-A_0\|_2 \label{eq:abstraction-gamma-for-linear-mk}
	\end{equation}
	in the linear case or, in general,
	a bound such that
	\begin{equation}
	\forall x_k, \sigma_k: \quad |A_{\sigma_k}(x_k) - A_0(x_k)| \le \gamma_{\sigma_k} |x_k|, \qquad \gamma_0=0.
	\label{eq:abstraction-gamma-def}
	\end{equation}
	Let $\rho,\alpha,\beta$ be parameters of the $v$-abstraction  \eqref{eq:abstraction-guarantee} -- \eqref{eq:abstraction-dynamics}. Then the same $\alpha,\beta$ and
	\begin{align}
		\rho_{\sigma_k} := \rho + \beta \gamma_{\sigma_k},
	\end{align}
	are valid parameters of the $\bar v$-abstraction \eqref{eq:abstraction-vbar-dynamics} -- \eqref{eq:abstraction-vbar-guarantee}.
\end{theorem}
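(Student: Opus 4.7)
My plan is to reduce the claim to the already-established robustness of the nominal $v$-abstraction by absorbing the mode-dependent deviation into an effective disturbance. Concretely, I would rewrite the dynamics as
\begin{equation}
x_{k+1} = A_{\sigma_k}(x_k) + w_k = A_0(x_k) + \tilde w_k, \qquad \tilde w_k \coloneq A_{\sigma_k}(x_k) - A_0(x_k) + w_k,
\end{equation}
so that the system is driven by $A_0$ alone, perturbed by $\tilde w_k$. By \eqref{eq:abstraction-gamma-def} (or \eqref{eq:abstraction-gamma-for-linear-mk} in the linear case) the new disturbance satisfies $|\tilde w_k| \le \gamma_{\sigma_k}|x_k| + |w_k|$.

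Next I would invoke \cref{thm:exp-decay}, noting that its statement (and the Lyapunov-based proof sketch for the nonlinear case) does not require $w_k$ to be exogenous; any sequence $\tilde w_k$, even state-dependent, is admissible. Thus the auxiliary $v$-trajectory
\begin{equation}
v_{k+1} = \rho v_k + \beta |\tilde w_k|, \qquad v_0 = \alpha |x_0|
\end{equation}
satisfies $|x_k| \le v_k$ for every $k$.

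Finally, I would prove $\bar v_k \ge v_k$ by induction, which yields the desired abstraction guarantee $|x_k| \le v_k \le \bar v_k$. The base case is $\bar v_0 = v_0 = \alpha|x_0|$. For the step, assuming $\bar v_k \ge v_k \ge |x_k|$, I compute
\begin{align}
v_{k+1} &= \rho v_k + \beta |\tilde w_k| \le \rho \bar v_k + \beta \gamma_{\sigma_k} |x_k| + \beta |w_k| \\
&\le (\rho + \beta \gamma_{\sigma_k}) \bar v_k + \beta |w_k| = \rho_{\sigma_k} \bar v_k + \beta |w_k| = \bar v_{k+1},
\end{align}
which closes the induction and establishes \eqref{eq:abstraction-vbar-guarantee}.

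I do not anticipate a serious obstacle: the argument is essentially a triangle-inequality split combined with a monotone induction. The only subtlety worth flagging explicitly is that applying \cref{thm:exp-decay} with a state-dependent $\tilde w_k$ is legitimate because that theorem's proof (for both linear and nonlinear cases) bounds $|x_{k+1}|$ or $V(x_{k+1})$ pointwise in terms of $|x_k|$, $V(x_k)$, and the instantaneous perturbation magnitude, without any assumption that the perturbation is independent of the state. I would state this observation once, then proceed with the induction as above.
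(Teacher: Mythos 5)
Your proposal is correct and follows essentially the same route as the paper's own proof: rewrite the mode deviation as an extra disturbance $\tilde w_k = A_{\sigma_k}(x_k)-A_0(x_k)+w_k$, apply the nominal $v$-abstraction of \cref{thm:exp-decay} to the $A_0$-system driven by $\tilde w_k$, and show $\bar v_k \ge v_k$ by the same induction. Your explicit remark that the state-dependence of $\tilde w_k$ is harmless is a point the paper leaves implicit, but it does not change the argument.
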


\begin{proof}[Derivation and Proof]
	\item 
	\paragraph{Equivalence of disturbance and weak execution} The abstraction \eqref{eq:abstraction-dynamics} describes the nominal case with disturbance. In the following, it will be used for the case of disturbance plus weak execution, i.\,e., sometimes skipping the controller. For this, the deviation from $\sigma_k\equiv 0$ is interpreted as additional disturbance, leading to the new disturbance $\tilde w_k$:

	\begin{equation}
	x_{k+1}=\underbrace{A_{\sigma_k} (x_k) + w_k}_{\text{before: $\sigma_k$, $w_k$}} = \underbrace{\vphantom{A_{\sigma_k}} A_{0} (x_k)}_{\text{now: $\tilde \sigma_k=0$,}} + \underbrace{A_{\sigma_k}(x_k) - A_0(x_k) + w_k}_{\mathllap{\text{disturbance }}\tilde w_k}. \label{eq:weakly-hard-equivalence-to-disturbance}
	\end{equation}
	Therefore, we may equivalently replace $\sigma_k$ by $\tilde \sigma_k$ and $w_k$ by $\tilde w_k$ in the original system.

	The previous definition of $\gamma_i$ becomes
	\begin{equation}
	\forall x_k, \sigma_k: \quad |\underbrace{A_{\sigma_k}(x_k) - A_0(x_k)}_{\tilde w_k - w_k}| \le \gamma_{\sigma_k} |x_k|
	\stackrel{\text{\eqref{eq:abstraction-guarantee}}}{\le} \gamma_{\sigma_k} v_k, \qquad \gamma_0=0. \label{eq:abstraction-gamma-def-proof}
	\end{equation}

	\paragraph{Derivation of the abstraction}
	Then, the dynamics  of $v_k$ can be overapproximated by $\bar v_k$, forming a second layer of abstraction:
	Initialize $\bar v_k$ by
	\begin{align}
	\bar v_0 = & v_0 \stackrel{\text{\eqref{eq:abstraction-dynamics}}}{=} \alpha |x_0| \label{eq:abstraction-vbar-init}
	\intertext{so that the induction assumption (IA) $\bar v_k \geq v_k$ is satisfied for $k=0$. Then, the dynamics}
	\bar v_{k+1} &:= \underbrace{(\rho + \beta \gamma_{\sigma_k})}_{\rho_{\sigma_k}} \bar v_k + \beta |w_k|
	\\ &= \rho \underbrace{\bar v_k}_{\stackrel{\text{if (IA)}}{\geq} v_k } + \beta \underbrace{\gamma_{\sigma_k} \bar v_k}_{\stackrel{\text{if (IA)}}{\geq} \gamma_{\sigma_k} v_k \stackrel{\text{\eqref{eq:abstraction-gamma-def-proof}}}{\geq} |\tilde w_k - w_k|} + \beta |w_k|
	\\ & \stackrel{\mathclap{\text{if (IA)}}}{\geq}~~ \rho v_k + \beta |\tilde w_k - w_k|  + \beta |w_k|\\
	&\ge  \rho v_k + \beta |\tilde w_k|\\
	&\stackrel{\mathclap{\text{\eqref{eq:abstraction-dynamics} for $\tilde\sigma,\tilde w$ per \eqref{eq:weakly-hard-equivalence-to-disturbance}}}}{=} \hspace{4em} v_{k+1}
	\intertext{guarantee by induction that}
	\forall k \geq 0: \quad \bar v_k &\geq v_{k}
	\intertext{and therefore}
	\forall k \geq 0: \quad |x_k| & \stackrel{\text{\eqref{eq:abstraction-guarantee}}}{\le} \bar v_{k}.
	\end{align}
\end{proof}

\subsection{Improved Abstraction for Weak Execution Based on Lyapunov Functions}
\label{sec:weak:lyapunov-abstraction}
The previous abstraction treats any deviation from the nominal case as disturbance, which may cause pessimism. As an extreme example, consider the case of $A_0 \neq 0, A_1 = 0$, i.e., the state jumps to zero immediately for the non-nominal execution $\sigma_k=1$. Then, as detailed in the previous derivation, this behavior is abstracted as the worst behavior possible from any disturbance with magnitude $|w_k|\le|A_{1}(x_k)-A_0(x_k)|$, i.e., as possibly increasing the state instead of actually zeroing it.

An improved abstraction which represents the non-nominal case with better accuracy is possible by considering a quadratically bounded Lyapunov function $V(x)$ for the nominal case and abstracting the state space by level sets $V(x) \le \mathrm{const}$. The theory behind this abstraction is closely related to the proof sketch of \cref{thm:exp-decay} for the nonlinear case.

\begin{theorem}

	Consider the system
	\begin{align}
	x_{k+1} &= A_{\sigma_k}(x_k) + w_k, \quad x_0 \in \R^n. \label{eq:improved-mk-abstr:sys}
	\end{align}
	Let $V(x)$ be any function which fulfills the following properties, e.g., $V(x) = x^\transp P x$ with positive definite $P \in \R^{n \times n}$:
	\begin{align}
	\forall x \in \R^n: \quad V(x)&
	\begin{cases}
	>0, & x \neq 0,\\
	=0, & x = 0
	\end{cases} & \text{ (positive definite),}\\
	\exists \gamma \ge 1: \quad \forall a,b \in \R^n: \quad \sqrt{V(a+b)} \le& \sqrt{V(a)}+\gamma \sqrt{V(b)}  & \text{ (weakly subadditive),} \label{eq:improved-mk-abstr:subadd}\\
	\exists c_1,c_2 >0: \quad \quad \forall x \in \R^n: \quad c_1 \sqrt{V(x)} \le& |x| \le c_2 \sqrt{V(x)} & \text{(equivalent to norm).} \label{eq:improved-mk-abstr:equiv}
	\end{align}
	Then,
	\begin{align}
		\bar v_{k+1} &= \rho_{\sigma_k} \bar v_k + \beta |w_k|, & \bar v_0 &= \alpha |x_0| \label{eq:improved-mk-abstr:abstraction-dynamics}
	\end{align}
		is a convergence rate abstraction for the system \eqref{eq:improved-mk-abstr:sys}, i.e., it guarantees $|x_k| \le \bar v_k$, for
	\begin{align}
		 \rho_{\sigma_k} &\geq \|A_{\sigma_k}\|_V := \sup_{x \neq 0} \sqrt{\frac{V(A_{\sigma_k}(x))}{V(x)}},& \alpha&\geq \frac{c_2}{c_1},& \beta \ge \gamma \frac{c_2}{c_1}. \label{eq:improved-mk-abstr:alpha-rho-norm-def}
	\end{align}
\end{theorem}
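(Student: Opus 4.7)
The plan is to mimic the Lyapunov-style argument sketched in the nonlinear part of \cref{thm:exp-decay}: construct a single scalar gauge $W(x) \coloneqq c_2 \sqrt{V(x)}$ that upper-bounds $|x|$ and whose one-step evolution can be tracked by the scalar recursion~\eqref{eq:improved-mk-abstr:abstraction-dynamics}. I would first re-cast the three assumed properties of $V$ as properties of $W$: the norm-equivalence~\eqref{eq:improved-mk-abstr:equiv} yields the sandwich $|x| \le W(x) \le (c_2/c_1)\,|x|$; the weak subadditivity~\eqref{eq:improved-mk-abstr:subadd} transfers directly to $W(a+b) \le W(a) + \gamma W(b)$; and the definition of $\|A_\sigma\|_V$ in~\eqref{eq:improved-mk-abstr:alpha-rho-norm-def} gives the contraction bound $W(A_\sigma(x)) \le \|A_\sigma\|_V\, W(x) \le \rho_\sigma W(x)$.

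Next I would establish the invariant $W(x_k) \le \bar v_k$ by induction on $k$. The base case is $W(x_0) \le (c_2/c_1)\,|x_0| \le \alpha |x_0| = \bar v_0$. The induction step chains the three properties of $W$:
\begin{align}
W(x_{k+1}) &= W\!\left(A_{\sigma_k}(x_k) + w_k\right) \\
&\le W(A_{\sigma_k}(x_k)) + \gamma\, W(w_k) \\
&\le \rho_{\sigma_k}\, W(x_k) + \gamma\, (c_2/c_1)\, |w_k| \\
&\le \rho_{\sigma_k}\, \bar v_k + \beta |w_k| \;=\; \bar v_{k+1},
\end{align}
using successively weak subadditivity, the contraction bound on $A_{\sigma_k}$, the upper sandwich estimate $W(w_k) \le (c_2/c_1)\,|w_k|$, and the induction hypothesis combined with $\beta \ge \gamma c_2/c_1$. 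Composing the result with $|x_k| \le W(x_k)$ delivers the desired abstraction guarantee $|x_k| \le \bar v_k$.

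I do not anticipate a genuine obstacle: the real work has been packaged into the three postulated properties of $V$, which are tailored so that $\sqrt{V}$ behaves like a quasi-norm compatible with the dynamics. The only step deserving a sanity check is that the suggested canonical example $V(x) = x^\transp P x$ really satisfies the weak subadditivity~\eqref{eq:improved-mk-abstr:subadd}; this follows with $\gamma = 1$ from the Cholesky identity $\sqrt{V(x)} = |Rx|$ (cf.~\cref{remark:lyapunov-linear}) together with the Euclidean triangle inequality, while the norm-equivalence~\eqref{eq:improved-mk-abstr:equiv} holds with $c_1$ and $c_2$ given by the extreme singular values of $R$. Hence the proof reduces to the routine induction sketched above, and the theorem is fully parallel in structure to the abstraction argument of \cref{thm:exp-decay}, just with $V$ playing the role of the Lyapunov function and the supremum norm $\|\cdot\|_V$ replacing the spectral contraction rate.
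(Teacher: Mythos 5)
Your proposal is correct and is essentially the paper's own proof: the paper likewise proceeds by induction on the invariant $c_2\sqrt{V(x_k)} \le \bar v_k$ (your $W(x_k)\le \bar v_k$ with $W=c_2\sqrt{V}$), chaining weak subadditivity, the bound $\sqrt{V(A_{\sigma_k}(x))} \le \|A_{\sigma_k}\|_V\sqrt{V(x)}$, and the norm equivalence, then concluding via $|x_k|\le c_2\sqrt{V(x_k)}$. Packaging the argument through the gauge $W$ is only a notational tidying of the same induction.
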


The above requirements are fulfilled by a wide class of Lyapunov functions including quadratic forms $V(x)=x^\transp P x$, the square $V(x)=\|x\|_v^2$ of an arbitrary vector norm $\|\cdot\|_v$, and piecewise-defined variants thereof, such as piecewise-ellipsoidal or piecewise-polytopic functions. The connection between convergence rate abstractions and Lyapunov functions is discussed later.

\begin{proof}[Proof]
	\begin{align}
	|x_{k+1}| \stackrel{\text{\eqref{eq:improved-mk-abstr:equiv}}}{\le} &c_2 \sqrt{V(x_{k+1})}\\ \stackrel{\text{\eqref{eq:improved-mk-abstr:sys}}}{=} &c_2 \sqrt{V(A_{\sigma_k} (x_k) + w_k})\\ \stackrel{\text{\eqref{eq:improved-mk-abstr:subadd}}}{\le} &c_2 \sqrt{V(A_{\sigma_k} (x_k))} + c_2 \gamma \sqrt{V(w_k)} \\
	\stackrel{\text{\eqref{eq:improved-mk-abstr:alpha-rho-norm-def}}}{\le} &c_2 \|A_{\sigma_k}\|_V \sqrt{V(x_k)} +c_2  \gamma \sqrt{V(w_k)} \\
	\stackrel{\text{\eqref{eq:improved-mk-abstr:equiv}}}{\le} &c_2 \|A_{\sigma_k}\|_V \sqrt{V(x_k)} + \frac{c_2}{c_1} \gamma |w_k| \label{eq:improved-mk-abstr:x-bound}
	\end{align}

	The abstraction's state $\bar v_k$ tracks an upper bound of $c_2 \sqrt{V(x_k)}$ arising from this equation, which is shown by induction in the following:

	\paragraph{Induction assumption IA($k$):}
	\begin{equation}
		\text{IA($k$)} \quad :\Leftrightarrow \quad c_2 \sqrt{V(x_k)} \le \bar v_k
	\end{equation}

	\paragraph{Start of induction:}
	\begin{equation}
	\bar v_0 \stackrel{\text{\eqref{eq:improved-mk-abstr:abstraction-dynamics}}}{=} \alpha |x_0| \stackrel{\text{\eqref{eq:improved-mk-abstr:equiv}}}{\ge} \alpha c_1 \sqrt{V(x_0)} \stackrel{\text{\eqref{eq:improved-mk-abstr:alpha-rho-norm-def}}}{\ge} c_2 \sqrt{V(x_0)} \Rightarrow  \text{ IA($0$)}
	\end{equation}

	\paragraph{Induction step:}
	\begin{align}
		\text{IA($k$)} \Rightarrow |x_{k+1}| \stackrel{\text{\eqref{eq:improved-mk-abstr:x-bound}}}{\le} &
		c_2 \|A_{\sigma_k}\|_V \sqrt{V(x_k)} + \frac{c_2}{c_1} \gamma |w_k|\\
		\stackrel{\text{IA($k$)}}{\le}& \|A_{\sigma_k}\|_V \bar v_k  + \frac{c_2}{c_1} \gamma |w_k|\\
		\stackrel{\text{\eqref{eq:improved-mk-abstr:alpha-rho-norm-def}}}{\le} & \rho_{\sigma_k} \bar v_k  + \beta |w_k| \stackrel{\text{\eqref{eq:improved-mk-abstr:abstraction-dynamics}}}{=} v_{k+1}
		\Rightarrow \text{ IA($k+1$)}
	\end{align}

	\paragraph{Conclusion:}
	By induction, IA($k$) holds for all $k \geq 0$. This proves the desired abstraction guarantee since
	\begin{equation}
	|x_k| \stackrel{\text{\eqref{eq:improved-mk-abstr:equiv}}}{\le} c_2 \sqrt{V(x_0)} \stackrel{\text{IA($k$)}}{\le} \bar v_k \quad \forall k \geq 0.
	\end{equation}

\end{proof}

The best bound for $\alpha$ is
\begin{equation}
	\alpha^*=\frac{c_{2,\min}}{c_{1,\max}} = \frac{\sup\limits_{x \neq 0}\frac{|x|}{\sqrt{V(x)}}}{\inf \limits_{x \neq 0}\frac{|x|}{\sqrt{V(x)}}}\stackrel{(*)}{=}\frac{\sup\limits_{x \neq 0}\frac{\sqrt{V(x)}}{|x|}}{\inf \limits_{x \neq 0}\frac{\sqrt{V(x)}}{|x|}},
\end{equation}
where $(*)$ is due to $\sup{|x|} = 1/\inf{|\frac{1}{x}}|$ for $x \neq 0$.
\paragraph{Interpretation for Quadratic Lyapunov Functions}
For $V(x)=x^\transp P x$ with $P$ positive definite, the level set $V(x)=\mathrm{const}$ is an ellipsoid, so the abstraction can be described as \emph{ellipsoid abstract domain}; see \cite{Roux2012} for a detailed discussion. Then, $\alpha^*$ is the excentricity of the ellipsoid, i.e., the ratio of major and minor axis; $\sqrt{x^\transp P x}$ is a vector norm, so $\gamma=1$ and, in the linear case, $\|A_\sigma\|_V$ is its induced \emph{ellipsoidal matrix norm} (\cite[Section~2.3.7]{Jungers2009}, \cite[Section~8]{Gaukler2019extended}).

\paragraph{Connection to Lyapunov Functions}
To derive an abstraction that shows stability of the nominal case using the previous theorem, it is required that $
\|A_{0}\|_V < 1$. This equivalently means that $V(x)$ is a Lyapunov function for the \emph{nominal case} because
\begin{align}
V(A_0(x)) &\le \underbrace{\|A_0\|_V^2}_{<1} V(x)\\
\Rightarrow ... \Rightarrow V(x_k) &\le \|A_0\|_V^{2k} V(x_0) \stackrel{k \to \infty}{\to} 0 \quad\text{ for }\sigma \equiv 0, w \equiv 0\\
\stackrel{\text{\eqref{eq:improved-mk-abstr:equiv}}}{\Rightarrow} |x_k| &\stackrel{k \to \infty}{\to} 0 \text{ for }\sigma \equiv 0, w \equiv 0.
\end{align}
However, the same does not hold for weakly-hard execution: If $\|A_1\|_V>1$, then $V(x)$ may increase on $\sigma_k=1$, so it is no longer a Lyapunov function for the \emph{weakly-hard} system. Nevertheless, as will be discussed in the next sections, stability can still be shown if the abstraction (or equivalently, $V(x)$) is decreasing on average, \ie, in the long term. Then, $V(x)$ is a \emph{Lyapunov-like function} for the weakly-hard system, which decreases in the long term but may temporarily increase, similar to the definition in \cite[Theorem 4.2]{Ye1996}.

\subsection{Stability Verification}

\begin{theorem}[Abstracted Stability Criterion for Weak Execution]
	\label{thm:abstraction-exp-stability-kappa}
With the factor
\begin{equation}
 \kappa_{a,b} \coloneq \frac{\bar v_b}{\bar v_a}  = \prod_{i=a}^{b-1} \rho_{\sigma_i}, \quad a \le b, \label{eq:abstraction-kappa-def}
\end{equation}
a sufficient criterion for exponential stability can be constructed:

If the execution sequence $\sigma_k$ satisfies
\begin{equation}
	\forall k \geq 0 \quad \kappa_{0,k} \leq \tilde \alpha \tilde \rho^k \label{eq:abstraction-exp-stability-kappa-condition}
\end{equation}
with $\tilde \rho  \in \lbrack 0;1)$ and $\tilde \alpha \geq 1$,
then the original system \eqref{eq:abstraction-orig-system} is exponentially stable for $w_k \equiv 0$.
\end{theorem}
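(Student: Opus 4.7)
The plan is to chain together the abstraction guarantee with the explicit solution of the one-dimensional abstract recursion. The statement is essentially a direct consequence of everything already set up: once we know that $|x_k| \le \bar v_k$ and that $\bar v_k$ evolves linearly according to \eqref{eq:abstraction-vbar-dynamics}, the disturbance-free case reduces the whole question to bounding a product of the $\rho_{\sigma_i}$, which is precisely what the hypothesis \eqref{eq:abstraction-exp-stability-kappa-condition} does.

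First I would specialize the abstraction dynamics to the case $w_k \equiv 0$: then \eqref{eq:abstraction-vbar-dynamics} collapses to $\bar v_{k+1} = \rho_{\sigma_k} \bar v_k$, so by induction
\begin{equation}
\bar v_k = \bar v_0 \prod_{i=0}^{k-1} \rho_{\sigma_i} = \alpha |x_0| \cdot \kappa_{0,k},
\end{equation}
where the first equality uses the definition of $\kappa_{0,k}$ from \eqref{eq:abstraction-kappa-def} and the second uses the initialization $\bar v_0 = \alpha |x_0|$.

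Next I would invoke the hypothesis \eqref{eq:abstraction-exp-stability-kappa-condition} to replace $\kappa_{0,k}$ by $\tilde \alpha \tilde \rho^k$, giving $\bar v_k \le \alpha \tilde \alpha \tilde \rho^k |x_0|$. Combining this with the abstraction guarantee \eqref{eq:abstraction-vbar-guarantee} yields
\begin{equation}
|x_k| \le \bar v_k \le \alpha \tilde \alpha \, \tilde \rho^k |x_0| \quad \forall k \ge 0,
\end{equation}
which is exactly exponential stability in the sense of \eqref{eq:abstraction-nominal-exp-stability}, with overshoot factor $\alpha \tilde \alpha \ge 1$ and growth rate $\tilde \rho \in [0,1)$.

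There is really no hard step here, which is the main point: the whole purpose of the convergence rate abstraction is to reduce the multi-dimensional stability question to this trivial product bound. The only thing that needs care is noting that the abstraction guarantee and the initialization $\bar v_0 = \alpha |x_0|$ were established (under the relevant assumptions) by the preceding theorems, so no new analysis of the original $n$-dimensional $x$-dynamics is required at this stage — all the real work has already been absorbed into the constants $\alpha,\beta,\rho_\sigma$.
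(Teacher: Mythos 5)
Your proposal is correct and follows essentially the same route as the paper's own proof: chain the abstraction guarantee \eqref{eq:abstraction-vbar-guarantee} with $\bar v_k = \kappa_{0,k}\,\bar v_0 = \kappa_{0,k}\,\alpha |x_0|$ and the hypothesis \eqref{eq:abstraction-exp-stability-kappa-condition} to obtain $|x_k| \le \alpha \tilde\alpha \tilde\rho^k |x_0|$. The only cosmetic difference is that you explicitly unroll the disturbance-free recursion by induction, whereas the paper reads the product form directly off the definition \eqref{eq:abstraction-kappa-def}.
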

\begin{proof} Assume $\tilde \rho  \in \lbrack 0;1)$ and $w_k \equiv 0$.
\begin{align}
\text{\eqref{eq:abstraction-exp-stability-kappa-condition}} &\Rightarrow \left(
\forall k \geq 0,x_0 \qquad |x_k|   \stackrel{\text{\eqref{eq:abstraction-vbar-guarantee}}}{\le} \bar v_k
	 \stackrel{\text{\eqref{eq:abstraction-kappa-def}}}{=} \kappa_{0,k} \bar v_0
	 \stackrel{\text{\eqref{eq:abstraction-exp-stability-kappa-condition}, \eqref{eq:abstraction-vbar-init}}}{\le} \tilde \alpha \tilde \rho^k \alpha |x_0|
\right)  \nonumber \\
	  &\Rightarrow \text{\eqref{eq:abstraction-orig-system} is exponentially stable.}
\label{eq:abstraction-exp-stability-kappa}
\end{align}
\end{proof}

\section{Application to $(m,K)$-weak execution} \label{sec:mk-weak}
This section will specialize the results of the previous section to the case of $(m,K)$-weak execution, which is considered in most work on weakly-hard control systems suffering from packet loss or deadline misses. Throughout this section, we again consider the abstraction given by \cref{def:linear-weak-abstraction}.

\begin{definition}[$(m,K)$-weak execution \cite{Hamdaoui1995}]
	\label{def:mk}
	In this case, there are only two execution modes ($\Sigma=\{0,1\}$). In any $K$ consecutive control periods, at least $m$ controller executions are executed normally ($\sigma_i=0$), while the remaining up to $\bar m \coloneq K-m$ controller executions are skipped ($\sigma_i=1$):
	\begin{equation}
	\forall k\geq 0: \quad \sum_{i=k}^{k+K-1} \sigma_i \le \bar m \coloneq K-m. \label{eq:abstraction-mk-def}
	\end{equation}
\end{definition}
For the later derivations, an upper bound on the number of skips in $k$ periods is constructed by partitioning the sequence $i=0,1,\dots,k-1$ into $\lfloor k/K \rfloor$ chunks of length $K$ (first: $0 \le i \le K-1$, second: $K \le i \le 2K-1$, and so on) and a remainder ($k - (k \bmod K) \le i \le k-1$)  of length $k \bmod K$.
\begin{align}
\nonumber
\Rightarrow \qquad 	\forall k\geq 0:\quad \sum_{i=0}^{k-1} \sigma_i =& \sum_{j=1}^{\lfloor k/K \rfloor} \underbrace{ \sum_{i=(j-1)K}^{jK - 1} \sigma_i}_{\le \bar m \text{ due to \eqref{eq:abstraction-mk-def}}} + \sum_{i=k - (k \bmod K)}^{k-1} \underbrace{\sigma_i}_{\le 1} \\
\le& \underbrace{\bar m \lfloor k/K \rfloor}_{\text{$\bar m$ per integer multiple of $K$}}  + \underbrace{\min(\bar m,  k \bmod K)}_{\text{remaining $<K$ periods: at most $\bar m$}}. \label{eq:abstraction-mk-def-implication}
\end{align}

This bound is tight as it is reached for $\sigma_k=\begin{cases}
1,& k \bmod K \le \bar m\\
0,& \text{else.}
\end{cases}$

\begin{proof}[Derivation of an Exponential Stability Criterion for $(m,K)$-Weak Execution] Assume $\sigma$ obeys $(m,K)$-weak execution and $\rho_1 \geq \rho_0$. Evaluate $\kappa_{0,k}$ from \cref{thm:abstraction-exp-stability-kappa}:
\begin{equation}
	 \kappa_{0,k} = \prod_{i=0}^{k-1} \rho_{\sigma_i} \stackrel{\text{\eqref{eq:abstraction-gamma-def}}}{=} \prod_{i=0}^{k-1}
	 \begin{cases}
	 \rho_0, & \sigma_i=0\\
	 \rho_1,& \sigma_i=1\\
	 \end{cases}.
\end{equation}
Using $\sigma_i \in \left\lbrace0,1\right\rbrace$ and the $(m,K)$-constraint~\eqref{eq:abstraction-mk-def}, we can derive an upper bound $\bar \kappa_{0,k}$:
\begin{align}
	\kappa_{0,k} &= \rho_0^{\overbrace{\scriptstyle \sum_{i=0}^{k-1} (1-\sigma_i)}^{\mathclap{k-\sum_{i=0}^{k-1} \sigma_i}}} ~\rho_1^{\sum_{i=0}^{k-1} \sigma_i}
	= \rho_0^k \big(\underbrace{\rho_0^{-1} \rho_1}_{\mathclap{\text{assumed }\geq 1}} \big)^{\overbrace{\scriptstyle \sum_{i=0}^{k-1} \sigma_i}^{\mathclap{\text{upper bounded by \eqref{eq:abstraction-mk-def-implication}}}}} \\
	& \stackrel{\text{\eqref{eq:abstraction-mk-def-implication}}}{\leq}
	 \rho_0^k \left(\rho_0^{-1} \rho_1\right)^{\bar m \lfloor k/K \rfloor  + \min(\bar m,  k \bmod K)}
	 \eqqcolon \bar \kappa_{0,k}. \label{eq:abstraction-kappabar}
\end{align}
For $k \to \infty$, the maximum average ratio of skips ($\sigma_k=1$) is $\bar m/K$:
\begin{align}
	\lim_{k \to \infty} \frac{1}{k} \sum_{i=0}^{k-1} (\sigma_i) \stackrel{\text{\eqref{eq:abstraction-mk-def-implication}}}{\le}& \lim_{k\to\infty} \frac{\bar m \lfloor k/K \rfloor  + \min(\bar m,  k \bmod K)}{k} \nonumber \\ 
	=& \lim_{k\to\infty} \Bigg( \bar m \underbrace{\frac{\lfloor k/K \rfloor}{k}}_{\to \frac{1}{K}}  + \underbrace{\frac{\min(\bar m,  k \bmod K)}{k}}_{\to 0} \Bigg)= \frac{\bar m}{K}. \label{eq:abstraction-avg-droprate}
\end{align}

The following derivation uses the above statements to determine the minimum $\tilde\rho$ and corresponding $\tilde \alpha$ that fulfill the ansatz
\begin{equation}
	\bar \kappa_{0,k} \stackrel{!}{\leq} \tilde \alpha \tilde \rho^k \quad \forall k \ge 0,
\end{equation}
which shall later be used to show the abstracted stability criterion \eqref{eq:abstraction-exp-stability-kappa}.
Assuming $\tilde \rho>0$, the ansatz is equivalent to
\begin{equation}
	\bar \kappa_{0,k}^{1/k} \stackrel{!}{\leq} \underbrace{\tilde \alpha^{1/k}}_{\mathrlap{\hspace{-.5em}\to 1 \text{ for }k \to \infty}} \tilde \rho,
\end{equation}
which motivates that a candidate for the minimum $\tilde \rho$ is
\begin{equation}
	\tilde \rho \coloneq \lim_{k\to\infty} \bar \kappa_{0,k}^{1/k} \stackrel{\text{\eqref{eq:abstraction-kappabar}}}{=} \lim_{k\to\infty}  \rho_0 \left(\rho_0^{-1} \rho_1\right)^{\frac{\bar m \lfloor k/K \rfloor  + \min(\bar m,  k \bmod K)}{k}} \stackrel{\text{\eqref{eq:abstraction-avg-droprate}}}{=} \rho_0^{1-\frac{\bar m}{K}} \rho_1^{\frac{\bar m}{K}}
	\stackrel{\text{\eqref{eq:abstraction-mk-def}}}{=}
	\rho_0^{\frac{m}{K}} \rho_1^{\frac{K-m}{K}}
	 \label{eq:abstraction-derivation-rho-kappabar}.
\end{equation}
The result can be interpreted as a special \enquote{weighted average} of the stability exponent for the extreme cases: Never skipping ($m=K$) yields the nominal case $\tilde \rho=\rho_0$ and always skipping ($m=0$) results in $\tilde \rho = \rho_1$.

The validity of this candidate $\tilde \rho$ is then implicitly proven by determining the corresponding overshoot factor $\tilde \alpha$ and showing that $\tilde \alpha < \infty$: For $\tilde \alpha$, consider
\begin{equation}
\bar \kappa_{0,k} \stackrel{!}{\leq} \tilde \alpha  \tilde \rho^k \quad \forall k \geq 0\quad \Leftrightarrow \quad \tilde \alpha \geq \bar \kappa_{0,k} \tilde \rho^{-k} \quad \forall k \geq 0\quad \Leftrightarrow \quad \tilde \alpha \geq \max_{k \geq 0}  \bar \kappa_{0,k} \tilde \rho^{-k} \label{eq:abstraction-derivation-alpha-kappabar-equivalence}
\end{equation}
and choose $\alpha$ as a finite upper bound for the right hand side:
\begin{align}
	\max_{k \geq 0}  \tilde \rho^{-k} \bar \kappa_{0,k}
	\stackrel{\text{\eqref{eq:abstraction-kappabar}, \eqref{eq:abstraction-derivation-rho-kappabar}}}{=}& \max_{k \geq 0} \rho_0^{-\frac{K-\bar m}{K}k} \rho_1^{-\frac{\bar m}{K}k} \, \rho_0^k \big(\underbrace{\rho_0^{-1} \rho_1}_{\mathclap{\text{assumed $\geq 1$}}}\big)^{\bar m \overbrace{\scriptstyle \lfloor k/K \rfloor}^{\leq k/K}  + \overbrace{\scriptstyle \min(\bar m,  k \bmod K)}^{\leq \bar m}}  \nonumber \\
	\leq& \max_{k \geq 0} \rho_0^{-\frac{K-\bar m}{K}k} \rho_1^{-\frac{\bar m}{K}k} \, \rho_0^k \left(\rho_0^{-1} \rho_1\right)^{\bar m k/K  + \bar m} \nonumber \\
	=& \max_{k \geq 0} \rho_0^{k\overbrace{\scriptstyle \left(-\frac{K-\bar m}{K} + 1 - \frac{\bar m}{K} \right)}^{0} - \bar m} \, \rho_1^{k\overbrace{\scriptstyle  \left( -\frac{\bar m}{K} + \frac{\bar m}{K}\right)}^{0} + \bar m} \nonumber \\
	=& \rho_0^{-\bar m}\rho_1^{\bar m} = \left(\frac{\rho_1}{\rho_0}\right)^{\bar m} \stackrel{\text{\eqref{eq:abstraction-mk-def}}}{=} \left(\frac{\rho_1}{\rho_0}\right)^{K - m} \eqcolon \tilde \alpha. \label{eq:abstraction-derivation-alpha-kappabar}
\end{align}

Note that $\tilde \alpha = \left(\tilde \rho/\rho_0\right)^K$.

With $\tilde \alpha$ and $\tilde \rho$ as per \cref{eq:abstraction-derivation-alpha-kappabar,eq:abstraction-derivation-rho-kappabar}, we can prove the ansatz
\begin{equation}
	\forall k\geq 0: \quad \kappa_{0,k} \stackrel{\text{\eqref{eq:abstraction-kappabar}}}{\leq} \bar \kappa_{0,k} \stackrel{\text{\eqref{eq:abstraction-derivation-alpha-kappabar-equivalence}, \eqref{eq:abstraction-derivation-alpha-kappabar}}}{\leq}\tilde \alpha \tilde \rho^k, \label{eq:abstraction-exp-stability-kappabar-bound}
\end{equation}
which leads to the stability criterion:
\end{proof}

\begin{theorem}[Exponential Stability Criterion for $(m,K)$-weak Execution]
	\label{thm:exp-stability-mk}
The above derivation shows that
\begin{equation}
	\underbrace{\rho_0^{\frac{m}{K}} \rho_1^{\frac{K-m}{K}}}_{\displaystyle \tilde \rho} < 1 \stackrel{\text{\eqref{eq:abstraction-exp-stability-kappa}, \eqref{eq:abstraction-exp-stability-kappabar-bound}}}{\Rightarrow}
	\begin{cases}
	\text{For $w_k \equiv 0$, $\rho_1 \geq \rho_0$ and $(m,K)$-weak execution (\cref{def:mk}),} \\
	\text{the original system \eqref{eq:abstraction-orig-system} is exponentially stable:}\\
	\forall k \geq 0,x_0 \quad |x_k| \le \alpha \tilde \alpha \tilde \rho^k |x_0|\\
	\text{with $\tilde \rho$, $\tilde \alpha$ per \labelcref{eq:abstraction-derivation-alpha-kappabar,eq:abstraction-derivation-rho-kappabar} and $\rho_0,\rho_1,\alpha$ per \cref{def:linear-weak-abstraction}.}
	\end{cases}
	 \label{eq:abstraction-exp-stability-kappabar}
\end{equation}
\end{theorem}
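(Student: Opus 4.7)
The plan is to assemble the theorem directly from the derivation that immediately precedes it together with \cref{thm:abstraction-exp-stability-kappa}. The derivation already establishes the key inequality $\kappa_{0,k} \le \tilde\alpha\,\tilde\rho^k$ for every $k \ge 0$ with $\tilde\rho = \rho_0^{m/K}\rho_1^{(K-m)/K}$ and $\tilde\alpha = (\rho_1/\rho_0)^{K-m}$, so the theorem is essentially an assembly statement rather than a new result.

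First I would verify that the standing hypotheses of the derivation are indeed the hypotheses of the theorem: $w_k \equiv 0$, $(m,K)$-weak execution as in \cref{def:mk}, and $\rho_1 \ge \rho_0$. Under these, \eqref{eq:abstraction-kappabar} provides the bound $\kappa_{0,k} \le \bar\kappa_{0,k}$ by applying the $(m,K)$-bound \eqref{eq:abstraction-mk-def-implication} to the exponent of $\rho_0^{-1}\rho_1 \ge 1$; and \eqref{eq:abstraction-exp-stability-kappabar-bound} sharpens this to $\kappa_{0,k} \le \tilde\alpha\,\tilde\rho^k$ via the identity $\tilde\alpha = (\tilde\rho/\rho_0)^K$ that falls out of the floor/remainder estimate.

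Next I would check that the hypothesis $\tilde\rho < 1$ of the theorem matches the hypothesis $\tilde\rho \in [0,1)$ needed to invoke \cref{thm:abstraction-exp-stability-kappa}: since $\rho_0,\rho_1 > 0$ we have $\tilde\rho > 0$ automatically, so the single assumption $\tilde\rho < 1$ is sufficient. Feeding $\tilde\rho$ and $\tilde\alpha$ into \eqref{eq:abstraction-exp-stability-kappa-condition} and applying \cref{thm:abstraction-exp-stability-kappa} then gives exponential decay of $\bar v_k$, from which the abstraction guarantee \eqref{eq:abstraction-vbar-guarantee} and the initialization $\bar v_0 = \alpha|x_0|$ chain into
\begin{equation}
|x_k| \le \bar v_k = \kappa_{0,k}\,\bar v_0 \le \tilde\alpha\,\tilde\rho^k\,\alpha\,|x_0|,
\end{equation}
which is exactly the claimed bound $|x_k|\le \alpha\tilde\alpha\tilde\rho^k|x_0|$.

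I do not anticipate any genuine obstacle; the main care required is bookkeeping, namely ensuring that the assumption $\rho_1 \ge \rho_0$ is used precisely where the derivation factors out $(\rho_0^{-1}\rho_1)^{\sum\sigma_i}$ and invokes monotonicity in the exponent. If instead $\rho_1 < \rho_0$, one may replace $\rho_1$ by $\rho_0$ in \eqref{eq:abstraction-vbar-dynamics} without loss of generality, since a larger $\rho_{\sigma_k}$ preserves \eqref{eq:abstraction-vbar-guarantee}; this would make the hypothesis $\rho_1 \ge \rho_0$ purely a normalization rather than a restriction. The proof itself is therefore just a one-line citation of \eqref{eq:abstraction-exp-stability-kappabar-bound} combined with \cref{thm:abstraction-exp-stability-kappa}.
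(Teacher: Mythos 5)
Your proposal is correct and follows essentially the same route as the paper: the theorem is indeed just the assembly of the preceding derivation's bound $\kappa_{0,k}\le\bar\kappa_{0,k}\le\tilde\alpha\tilde\rho^k$ \eqref{eq:abstraction-exp-stability-kappabar-bound} with \cref{thm:abstraction-exp-stability-kappa} and the chain $|x_k|\le\bar v_k=\kappa_{0,k}\bar v_0\le\tilde\alpha\tilde\rho^k\alpha|x_0|$. (Note only that $\rho_1\ge\rho_0$ also ensures $\tilde\alpha\ge1$, the remaining hypothesis of \cref{thm:abstraction-exp-stability-kappa}, which your bookkeeping implicitly covers.)
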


This criterion exemplifies the benefit of convergence rate abstractions: Stability can be shown without the intricate computation of an explicit stability certificate for the \emph{weakly-hard} system, such as a Lyapunov function or reachable set. Instead, the criterion only requires the execution parameters $(m,K)$ and a simple abstraction summarizing the stability and robustness of the \emph{nominal} system. The detailed system dynamics are not required, since they are abstracted by three scalar parameters $\rho_0, \alpha$ and $\rho_1$, which model exponential decay, initial overshoot and the sensitivity to skipping the controller execution. Determining these is possible merely from the exponential stability of the nominal case (\cref{sec:weak:simple-abstraction}) or, optionally, via Lyapunov functions for the \emph{nominal} case (\cref{sec:weak:lyapunov-abstraction}). Both methods are considerably easier than directly analyzing the weakly-hard case.

To determine the permissible skip ratio $(K-m)/K$ for a desired convergence rate $\tilde \rho$, \eqref{eq:abstraction-derivation-rho-kappabar} may be rewritten as
\begin{align}
\tilde \rho &= \rho_0^{1-\frac{\bar m}{K}} \rho_1^{\frac{\bar m}{K}} = \rho_0 \left(\rho_1 \rho_0^{-1}\right)^{\frac{\bar m}{K}}\\
\Leftrightarrow ~ \frac{K-m}{K} \stackrel{\text{\eqref{eq:abstraction-mk-def}}}{=} \frac{\bar m}{K} &= \frac{\log \left(\frac{\tilde \rho}{\rho_0}\right)}{\log \left(\frac{\rho_1}{\rho_0} \right)}.
\end{align}

It is interesting to note that the stability result provided by this criterion does not depend on the actual value of $K$, but on the skip ratio $(K-m)/K$ or, equivalently, the execution ratio $m/K$: Increasing $K$ and $m$ proportionally only increases the overshoot $\tilde \alpha$, but not the growth rate $\tilde \rho$, as the latter only depends on the ratio. Therefore, the criterion shows stability for $(m, K)$-weak execution if and only if it shows stability for $(cm, cK)$-weak execution, where $c \in \N$ is an arbitrary integer.

The generality of the proposed convergence rate abstractions can be seen by the fact that it contains existing results as a special case. For example, in \cite[Theorem 2]{Horssen2016} a Linear Matrix Inequality equivalent to the quadratic Lyapunov-like function $V(x)=x^\transp P x$ is used to show stability if, in our terms\footnote{Note that here, $\sigma=0$ means nominal execution and $\sigma=1$ means skipping. In \cite{Horssen2016}, it is the opposite.}, $\rho_1^{K-m} \rho_0^{m} < 1$, which is equivalent to $\tilde \rho < 1$. In that context, the criterion is shown to be conservative, which leads to the following question:

\paragraph{On the General Existence of Stable Abstractions}
It is an interesting open question under which conditions there is a converse variant of  \cref{thm:exp-stability-mk}: If a system is stable under weak execution, under which conditions does a stable abstraction exist? How complicated does this abstraction have to be (\eg, nonlinear or multi-dimensional)? Are there simplifications for typical practical cases? In general, these are open questions. As a first step, we will now show that simple abstractions, \ie, matching \cref{def:linear-weak-abstraction}, do not necessarily exist:

\paragraph{Naive Expectation: Converse Stability Criterion for $(m,K)$-weak Execution}
Assume a linear system without disturbance ($w_k \equiv 0$) that is exponentially stable under $(m,K)$-weak execution. One could expect that for any such system there is an abstraction
\begin{equation}
	\bar v_{k+1} = \rho_{\sigma_k} \bar v_k,\quad \bar v_0 \geq \bar \alpha |x_0| \label{eq:abstraction-exp-mk-converse:vk}
\end{equation} with constants $\rho_0, \rho_1, \bar \alpha$ such that
\begin{enumerate}
	\item The abstraction is valid, \ie, $|x_k| \le \bar v_k$ for all $x_0$ and all $(m,K)$-executions $(\sigma_0, \sigma_1,\dots)$.
	\item The abstraction proves exponential stability by \eqref{eq:abstraction-exp-stability-kappabar}, \ie, $\tilde \rho = \rho_0^{\frac{m}{K}} \rho_1^{\frac{K-m}{K}} < 1$.
\end{enumerate}

This would be helpful as it would imply that \cref{thm:exp-stability-mk} is sufficient and necessary, so that stability under $(m,K)$-execution is \emph{always} equivalent to stability under $(cm, cK)$-execution for $c \in \N$. However, the following academic counterexample will show that this is not generally true, which also matches the conservatism stated in \cref{remark:abstraction-weakly-hard:conservative} for a rather general and in \cite[Theorem 2]{Horssen2016} for a specific case.

Therefore, at least this simple variant of an abstraction is conservative, which raises some further open questions: How significant is this conservatism in practice? Can it be reduced by simple extensions, \eg, by having $\rho_{\sigma_k}$ depend not only on $\sigma_k$ but also on $\sigma_{k-1}$?

\begin{proof}[Counterexample]
	The system
	\begin{align}
		A_0&=\mat{a & 0 & a\\ 0 & 0 & 0 \\ 0 & 0 & 0}\\
		A_1&=\mat{a & 0 & 0 \\ c & 0 & 0 \\ 0 & 1 & 0}\\
		a&=\frac{1}{2},\quad c=1000
	\end{align}
	is stable under $(m,K)$-execution for $m=1,K=2$, but not for $m=2,K=4$, as will be shown below. Therefore, the above naive expectation is false because it would imply that $(1,2)$-stability is equivalent to $(2,4)$-stability.

	For the following steps, symbolic computations and a numerical experiment can be found in \verb|counterexample_mk_abstraction.m|, which is available in the anciliary files of this arXiv.org publication. To formally denote $(m,K)$-sequences and the resulting transition matrices, we introduce the notation $\{1,2,3\}$, which means that any of the given numbers may be inserted at any place; \eg, $A^{\{0,1\}}B^{\{0,1\}}$ may be $I$, $A$, $B$ or $AB$. Similarly, $A^{\{1,2,\dots\}}$ may be any $A^i$ with $i \geq 1$.
Which numbers will be inserted depends on the activation sequence $(\sigma_{0},\sigma_{1},\dots)$.

	Consider $m=1,K=2$, which results in the activation sequence
	\begin{align}
	(\sigma_0, \sigma_1, \dots,\sigma_{k-1}) &= (\underbrace{0,0,\dots,0,}_{\{0,1,\dots\} \text{ times}}\overbrace{1,\underbrace{0,0,\dots,0,}_{\{1,2,\dots\} \text{ times}}1,\underbrace{0,0,\dots,0,}_{\{1,2,\dots\} \text{ times}} \dots,}^{\{0,1,\dots\} \text{ repetitions of \enquote{1000...0}}} \underbrace{1}_{\text{\{0,1\} times}}),\\
	\intertext{and the state evolution $x_k=\Phi_{k}x_0$ with}
	\Phi_k &:= A_{\sigma_{k-1}} \cdots A_{\sigma_{1}} A_{\sigma_{0}} \\
	&= \overbrace{A_1^{\{0,1\}}\underbrace{A_0^{\{1,2,\dots\}} A_1 A_0^{\{1,2,\dots\}}A_1\cdots}_{0,1,\dots\text{ terms }A_0^{\{1,2,\dots\}} A_1}   A_0^{\{0,1,2,\dots\}}}^{k\text{ matrices}} ~.
	\intertext{In the above term, $A_0$ and $A_0A_1$ are \enquote{stable} factors:}
		\|A_0\|_2 &= \sqrt{2} a  \approx 0.707 < 0.9\\
		\|A_1\|_2 &= \sqrt{a^2+c^2} \approx 1000 < 1001\\
		\|A_0A_1\|_2 &= \sqrt{a^4+a^2} \approx 0.559 < 0.81 = 0.9^2\\
		\intertext{Using these norms and submultiplicativity, $\Phi_k$ can be bounded:}
		\Rightarrow  \|\Phi_k\|_2 &= \|\overbrace{A_1^{\{0,1\}}\underbrace{A_0^{\{1,2,\dots\}} A_1 A_0^{\{1,2,\dots\}}A_1\cdots}_{0,1,\dots\text{ terms }A_0^{\{1,2,\dots\}} A_1}   A_0^{\{0,1,2,\dots\}}}^{k\text{ matrices}}\|_2 \\
		&\le \|A_1\|_2^{\{0,1\}}  \|A_0^{\{0,1,\dots\}}\|_2 \|A_0A_1\|_2 \|A_0^{\{0,1,\dots\}}\|_2 \|A_0A_1\|_2 \dots \|A_0\|_2^{\{0,1,\dots\}} \\
		&\le \|A_1\|_2^1 \underbrace{0.9^{\{0,1,\dots\}} 0.9^2 0.9^{\{0,1,\dots\}} 0.9^2 \dots 0.9^{\{0,1,\dots\}}}_{k\text{ factors}}\\
		&\le 1001 \cdot 0.9^k.\\
	\end{align}
	The above implies $|x_k| < 0.9^k \cdot 1001 |x_0|$, so the system is exponentially stable for $m=1,K=2$.

	However, $(m=2,K=4)$-execution is unstable for the sequence
	\begin{align}
		(\sigma_0, \sigma_1, \dots)&=(0,0,1,1,0,0,1,1,\dots).
	\end{align}
	Consider $k=4i$:
	\begin{align}
		\Phi_{4i} &= (A_1^2A_0^2)^{i},\quad  \\
		A_1^2 A_0^2 &= \mat{a^4 & 0 & a^4\\ a^3\,c & 0 & a^3\,c\\ a^2\,c & 0 & a^2\,c} \mat{a^2 & 0 & a^2\\ 0 & 0 & 0\\ 0 & 0 & 0 } = \mat{a^4 & 0 & a^4\\ a^3\,c & 0 & a^3\,c\\ a^2\,c & 0 & \smashUnderbrace{a^2\,c}{\gg 1} }
	\end{align}
	Consider the eigenvalues $\lambda$ of this matrix to see that the system is unstable:
	\begin{align}
		\Rightarrow \lambda\{A_1^2A_0^2\} &= \{0, 0, \underbrace{a^4 + c\,a^2}_{>250} \}\\
	\Rightarrow \lambda\{\Phi_{4i}\} = \lambda\{(A_1^2A_0^2)^i\}& = \{0, 0, \underbrace{(a^4 + c\,a^2)^i}_{>250^i}\}
	\end{align}
	Choose $x_0$ as the eigenvector of $A_1^2A_0^2$ corresponding to the eigenvalue $\lambda >250$. Then,
	\begin{equation}
		|x_{4i}|=|(A_1^2A_0^2)^i x_0| = |\lambda^i x_0|>250^i |x_0|,
	\end{equation}
	so the system is unstable for $(2,4)$-execution.

	To check the above results numerically, the maximum averaged spectral radius
	\begin{equation}
		\hat \rho_L := \max_{\substack{\sigma_0,\dots,\sigma_{L-1}\\ \text{satisfying $(m,K)$}}} \max_i |\lambda_i\{\Phi_L\}|^{1/L}
	\end{equation}
	over all $(m,K)$-sequences $\sigma$ of length $L=24$ was computed for the specific values of $m$ and $K$. This averaged spectral radius approximates $\rho$ analogously to the definition of the Joint Spectral Radius \cite[Chapter~1.1]{Jungers2009}. The results support the statements given in the proof: For $(1,2)$-weak execution, $\hat \rho \approx 0.71 < 0.9$, and for $(2,4)$, the result is $\hat \rho \approx 3.9766 > 3.97635... = (250)^{1/4}$.
\end{proof}

\paragraph{Generalization to non-global exponential stability} For nonlinear systems, global exponential stability is a strong requirement which is often not fulfilled. If the original system (with perfect execution and zero disturbance) is only exponentially stable within some initial set $x_0 \in X_0$, it is required to check that $x_k$ does not escape the stability region during overshoots caused by skipping the controller. This means that exponential stability of the weakly-hard system will only be valid in a smaller initial set $x_0 \in \tilde X_0$. (For better understanding, assume that $X_0$ is chosen as an invariant set, \ie, in the nominal case $x_0 \in X_0 ~\Rightarrow~ \forall k: x_k \in X_0$ )

Because disturbing the system by skipping the controller corresponds to a reinitialization of the original dynamics, it must be ensured that then the state still is inside the initial set from which exponential convergence is guaranteed, which motivates that the safe initial set under weakly-hard execution will typically be smaller than the original safe set $X_0$.

The above holds for any form of weakly-hard execution. The following discussion gives a concrete example for $(m,K)$-weak execution. In this case,
an upper bound on the state can be determined from the stability result \eqref{eq:abstraction-exp-stability-kappabar}:
\begin{equation}
	|x_k| \stackrel{\text{\eqref{eq:abstraction-exp-stability-kappabar}}}{\le} \alpha \tilde \alpha \tilde \rho^k |x_0| \le  \alpha \tilde \alpha |x_0|.
	\label{eq:abstraction-nonglobal-stability-statebound}
\end{equation}

We require $X_0 \supseteq \mathcal B_{r_0}$.
To ensure that $x_k$ never leaves the set $X_0$ in which the stability assumption holds, the new set $\tilde X_0$ must be limited to
\begin{equation}
 \tilde X_0 = \mathcal B_{r_0 (\alpha \tilde{\alpha})^{-1}}.
\end{equation}

This is formalized by the following theorem:

\begin{theorem}[$(m,K)$-Weak Stability Criterion for Non-Global Exponential Stability]
	Consider a nonlinear system which is not globally exponentially stable, but at least within $|x| \le r_0$. In formal terms, assume that the abstraction of $x_{k+1}$ by $v_{k+1}$ with \cref{eq:abstraction-guarantee,eq:abstraction-dynamics} is valid if $|x_k| \le r_0$ or equivalently if $v_k \le r_0$.
	Then, the stability result \eqref{eq:abstraction-exp-stability-kappabar} holds for all initial states within $|x_0| \le r_0 \alpha^{-1} \tilde{\alpha}^{-1}$.
\end{theorem}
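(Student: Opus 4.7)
The plan is to close the argument by induction on $k$, simultaneously tracking that the local abstraction remains valid (the hypothesis $v_k \le r_0$) and that the stability bound \eqref{eq:abstraction-exp-stability-kappabar} continues to apply. First I would note that from $\alpha \ge 1$ and $\tilde \alpha \ge 1$, the assumed initial bound $|x_0| \le r_0 \alpha^{-1} \tilde \alpha^{-1}$ implies $|x_0| \le r_0$, and moreover $v_0 = \alpha |x_0| \le r_0 \tilde \alpha^{-1} \le r_0$, so the abstraction's local assumption is satisfied at $k = 0$ and $|x_0| \le v_0$ holds trivially.

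Next, I would set up the induction hypothesis: for all $i \le k$, both $|x_i| \le v_i$ and $v_i \le r_0$ hold. Under this hypothesis, the local validity condition is satisfied at each past step, so the abstraction dynamics \eqref{eq:abstraction-dynamics} (and, via the construction of \cref{sec:weak:simple-abstraction}, the $(m,K)$-abstraction \eqref{eq:abstraction-vbar-dynamics}) can be applied step by step up to time $k$. The derivation leading to \eqref{eq:abstraction-exp-stability-kappabar} then goes through unchanged, because its chain $v_k \le \kappa_{0,k} \, v_0 \le \tilde \alpha \tilde \rho^k \alpha |x_0|$ only uses the abstraction at the already-validated past steps.

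From $\tilde \rho < 1$ we then obtain
\begin{equation}
	v_{k+1} \le \alpha \tilde \alpha \tilde \rho^{k+1} |x_0| \le \alpha \tilde \alpha |x_0| \le r_0,
\end{equation}
by the assumed initial bound. This, together with $|x_{k+1}| \le v_{k+1}$ from one application of the abstraction step at time $k$, extends the induction hypothesis from $k$ to $k+1$. By induction, $v_k \le r_0$ for all $k$, the abstraction is valid for the entire trajectory, and the stability bound $|x_k| \le \alpha \tilde \alpha \tilde \rho^k |x_0|$ of \eqref{eq:abstraction-exp-stability-kappabar} applies globally in $k$.

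The main obstacle I anticipate is a subtle circularity: the bound \eqref{eq:abstraction-exp-stability-kappabar} was derived in \cref{thm:exp-stability-mk} under the tacit assumption that the one-step abstraction is always valid, whereas here that validity is conditional on $v_k \le r_0$. The induction resolves this cleanly only if the proof of \cref{thm:exp-stability-mk} is causal, i.e., uses no information from steps beyond $k$. Inspecting the chain \eqref{eq:abstraction-kappa-def}--\eqref{eq:abstraction-exp-stability-kappabar-bound} shows that this is indeed the case: each $\bar v_k$ depends only on $\bar v_0$ and on $\sigma_0, \dots, \sigma_{k-1}$ through the product $\kappa_{0,k}$, so the inductive bootstrap goes through without circularity.
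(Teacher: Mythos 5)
Your proposal is correct and follows essentially the same route as the paper's own proof sketch: an induction that keeps the trajectory inside the local validity region $|x_k|\le r_0$ (equivalently $v_k \le r_0$) by combining the exponential bound of \eqref{eq:abstraction-exp-stability-kappabar} with the shrunk initial ball $|x_0|\le r_0(\alpha\tilde\alpha)^{-1}$. Your explicit check that the derivation of \eqref{eq:abstraction-exp-stability-kappabar} is causal addresses precisely the step the paper itself flags as informal, so if anything your version is slightly more careful on that point.
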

\begin{proof}[Proof Sketch]
	Assume that $\tilde \rho < 1$ in \eqref{eq:abstraction-exp-stability-kappabar}. Otherwise, the statement \eqref{eq:abstraction-exp-stability-kappabar} is trivially true and there is nothing to prove.
	Also assume $|x_0| \le r_0 \alpha^{-1} \tilde{\alpha}^{-1}$, and that the abstraction \eqref{eq:abstraction-guarantee}, \eqref{eq:abstraction-dynamics} for $x_{k+1}$ holds for $k=i$ if $|x_0| \le r_0$.

	\emph{Induction Assumption (IA):} Assume $|x_{0,\dots,i}| \leq r_0$.

	\emph{Induction Step $i$:}
	Under this assumption, the abstraction \eqref{eq:abstraction-guarantee}, \eqref{eq:abstraction-dynamics} holds for $k=0,\dots,i$. Therefore, the stability result \eqref{eq:abstraction-exp-stability-kappabar} derived from the abstraction also holds for $k=0,\dots,i+1$. (Note that this step is rather informal; for a rigorous proof, \eqref{eq:abstraction-exp-stability-kappabar} should be extended to abstractions which do not hold for all $k$ or all $x$.) This allows showing that (IA) also holds for $i+1$:
	\begin{align}
	\left.\begin{aligned}
	|x_0| &\leq r_0 (\alpha \tilde{\alpha})^{-1}\\
	|x_{0,\dots,i}| &\leq r_0
	\end{aligned}\right\rbrace  \quad \Rightarrow \quad	|x_{i+1}| \stackrel{\text{\eqref{eq:abstraction-guarantee}, \eqref{eq:abstraction-dynamics}, \eqref{eq:abstraction-exp-stability-kappabar}}}{\le} \alpha \tilde \alpha \underbrace{\tilde{\rho}^{i+1}}_{\mathrlap{\hspace{-.3em}\le 1 \text{ by assumption}}} r_0 (\alpha \tilde{\alpha})^{-1} \le r_0 \label{eq:abstraction-nonglobal-exp-proof-step}
	\end{align}

\emph{Conclusion:} (IA) is satisfied for $i=0$. By induction, $(IA)$ holds for all steps $i$, so by \eqref{eq:abstraction-nonglobal-exp-proof-step} the abstraction \eqref{eq:abstraction-guarantee}, \eqref{eq:abstraction-dynamics} and therefore the stability result $(64)$ hold for all times $k$.
\end{proof}

\section{Application to other types of weak execution} \label{sec:other-weak}
In practice, weak execution can go beyond the classical $(m,K)$-model. Convergence rate abstractions can be readily adapted to other scenarios, which we demonstrate in this section.

\paragraph{Extension to weak timing requirements} While the previous section only considers binary scheduling decisions $\sigma \in \lbrace 0,1 \rbrace$, the results can be extended to more complex scenarios, including non-binary integer or real-valued decisions.

One particular example is weak timing for sensors and actuators: As discussed in \cite{Ulbrich2019}, it may be expensive or impractical to guarantee strictly periodic timing for sampling all sensors and updating all actuators. Allowing some timing deviation $\Delta t$ is desirable for an efficient implementation, however it must be ensured that the system performance (stability) is still acceptable. If fixed worst-case bounds for the timing are determined as in \cite{Gaukler2019extended}, these may be very small. More flexibility can be gained by incorporating the history of the timing in an abstraction: Occasional large timing deviations are acceptable if the timing inbetween is good.

In this case, $\sigma_k=\Delta t_k$ is the vector of timing deviations for each sensor and actuator, and $\rho_{\sigma}$ a bound which depends on the timing $\sigma$. Future research will be concerned with determining this bound via quadratic Lyapunov functions based on \cite{Gaukler2019extended}.

\paragraph{Extension to disturbance}
If an unknown but bounded disturbance is present, the dynamics \eqref{eq:abstraction-vbar-dynamics} of the $\bar v$-abstraction are modified to upper bound the influence of disturbance:
\begin{equation}
	\bar v_{k+1} = \rho_{\sigma_k} \bar v_k + \beta \bar w_k, \quad \bar w_k \ge |w_k|
\end{equation}
If the disturbance amplitude is greatly varying, an upper bound of the current disturbance may possibly be estimated using interval disturbance observers as proposed  by \cite{Chakrabarty2017}, which will be examined in future work. Otherwise, it should be enough to consider a fixed upper bound.

If the execution $\sigma_k$ is then chosen such that $v_k \le C=\mathrm{const}$, this guarantees practical stability, \ie there are sets $X_0, S$ such that
\begin{equation}
	\forall k,~ \forall x_0 \in X_0: \quad x_k \in S.
\end{equation}
Particularly, $v_k \le C$ is equivalent to $S=\mathcal{B}_C$ due to the abstraction guarantee \eqref{eq:abstraction-guarantee}, and $X_0$ must be chosen small enough such that there is always a choice which permits $v_k \le C$.

\section{Application to control-aware dynamic scheduling} \label{sec:application-dyn_sched}

The focus of the previous sections was on design-time stability guarantees for fixed design parameters, \eg, fixed $(m,K)$. However, changing disturbance and execution conditions typically require that the parameters are chosen pessimistically for worst-case disturbance and timing. While this guarantees worst-case safety, in the average case it will typically be unnecessarily strict and therefore inefficient, in particular if the average case is significantly better than the worst case.

The key to solving this design conflict between safety and efficiency is run-time adaptation~\cite{Ulbrich2019}: In a similar way as feedback control reacts to the environment, adaptive real-time scheduling adapts the timing requirements to changing disturbance and system load. For example, for worst-case disturbance, the controller must be executed strictly, whereas otherwise, it may be skipped from time to time.

The main difficulty with adaptive scheduling of real-time control is the overhead introduced by scheduling decisions. In this section, we present an outlook on how convergence rate abstractions can be used to construct low-overhead adaptive scheduling.

\paragraph{Exponential Stability Without Disturbance:} If the disturbance is zero, $ \kappa_{0,k}$ can be easily computed or overapproximated to obtain information about the current quality of control, here in the sense of a decay rate, and to predict which quality would result from a certain scheduling decision in the real-time operating system.

To guarantee a specified worst-case decay $\hat \rho$ and overshoot $\hat \alpha$, compute $\hat { \kappa}_k \coloneqq \hat{\rho}^{-k}  \kappa_{0,k}$ by
\begin{align}
\hat \kappa_0 = 1, \qquad \hat { \kappa} _{k+1} = \hat{\rho}^{-k-1} \kappa_{0,k+1} \stackrel{\text{\eqref{eq:abstraction-kappa-def}}}{=} \hat{\rho}^{-k-1} \rho_{\sigma_k} \kappa_{0,k} =  \hat{\rho}^{-1}  \underbrace{ \rho_{\sigma_k}}_{\mathclap{\text{depends on scheduling decision and timing}}}   \hat { \kappa} _{k}, \qquad k \geq 0
\end{align}
and ensure that all scheduling decisions for the $k$-th period respect $\hat{\kappa}_{k}\le\hat{\alpha}$:
\begin{align}
	\forall k \geq 0: \quad \hat { \kappa}_k &= \hat{\rho}^{-k}  \kappa_{0,k} \le \hat \alpha \\
	\quad  \Leftrightarrow  \quad 	\forall k \geq 0: \quad \kappa_{0,k} &\le \hat \alpha \hat{\rho}^{k} \quad \stackrel{\text{\eqref{eq:abstraction-exp-stability-kappa}}}{\Leftrightarrow} \quad \text{exp. stable with }\hat\rho, \hat\alpha.
\end{align}

This results in a generalization of the $(m,K)$-scheme.
For example, if the sensor/actuator timing deviation was low in the recent past, the computation of $v_k$ will show that $|x_k|$ is low and it is okay to allow a large one-time deviation or even skip the controller once. This will increase flexibility or save energy and computational resources. On the other hand, after bad timing or skipping multiple times, the scheduling will return to mostly nominal execution. Therefore, the new scheme can be seen as a generalization of the $(m,K)$-scheme to a variable length $K$ and non-binary decisions.

\paragraph{Practical Stability With Disturbance:} In the presence of disturbance, the abstraction $v_k$ or $\bar v_k$ (or an upper bound) can be computed at run-time. This yields a bound on $|x_k|$, which can be interpreted as quality of control if $x_k=0$ is defined as the setpoint (or the state is transformed appropriately). Computing a prediction of the future abstraction value can be used to obtain scheduling decisions which guarantee an upper bound for $|x_k|$, \eg, that a quadrotor UAV does not fly too far away from its intended position.

\paragraph{Safety Supervisor:}
Predictive computation of $v_k$ can also be used to implement the supervisor suggested in \cite{Ulbrich2019}, which raises an alarm if scheduling is about to violate the specified quality of control. If the alarm is triggered, the system can switch back to a deterministic safety mode which guarantees nominal execution.

\paragraph{Connection to Quadratic Control Cost:} For simplicity, this paper discussed abstractions for $|x_k|$. However, the results can be extended to other measures of the quality of control. One important example is the quadratic control cost $J_k := x_k^\transp Q x_k$ with the symmetric and positive semi-definite weight matrix $Q$. This quantity can be bounded by
\begin{equation}
	J_k \leq C v_k^2, \; C = \max_{i} \lambda_i\{Q\}\geq 0,
\end{equation}
so the benefits of abstractions are similarly applicable to this case. If the weighting of state components in $J_k$ is inequal, this bound may be pessimistic, which can be reduced by transforming the state to \enquote{cost-like coordinates} before the abstraction is computed. For example, if $Q$ is positive definite, the Cholesky decomposition $Q=R^\transp R$ and the transformation $\tilde x := R x$ leads to $J_k = \tilde x_k^2$, so an abstraction $\tilde v$ of the $\tilde x$-dynamics directly tracks $J_k \le \tilde v_k^2$.

\begin{proof}[Proof for $J_k \le C v_k^2$]
Since $Q$ is real and symmetric it can be decomposed into $\Lambda = V^\transp Q V$ using its orthonormal eigenbasis $V^\transp V = V V^\transp = I$ and $\Lambda = \diag \{ \lambda_i\{Q\} \}$ (cf. \cite[p. 325, 2. and p. 283, 9.]{Bronstein16}). Herein, $\lambda_i\{Q\}$ are the eigenvalues of $Q$ which are all real and nonnegative. Bounding $J_k$ is then done by applying the bijective transform $z := V^\transp x \Leftrightarrow x = V z$:

\begin{align}
  J_k &= x_k^\transp Q x_k = z_k^\transp \underbrace{V^\transp Q V}_{\Lambda} z_k = \sum\limits_i \lambda_i\{Q\} z_{k,i}^2 \nonumber \\
  &\leq \underbrace{\max\limits_{i} \lambda_i\{Q\}}_{:= C} \; \sum\limits_i z_{k,i}^2 = C \; x_k^\transp \underbrace{V V^\transp}_{I} x_k = C |x_k|^2 \overset{\eqref{eq:abstraction-guarantee}}{\leq} C v_k^2
\end{align}
\end{proof}

\pagebreak
\section{Conclusion} \label{sec:conclusion}

The stability analysis of weakly-hard real-time control systems is significantly more complex than for the classical hard real-time case. In this paper, we propose \emph{convergence rate abstractions} as a method for reducing said complexity. At first, we formalized the approach in \cref{sec:formalization}: We characterize a dynamic system in state-space representation by means of its state radius to derive a one-dimensional model of the worst-case behavior. This time-varying bound proves to be a useful compromise between the pessimism inherent to static worst-case stability analysis and the complexity associated with analyzing the original model for weakly-hard execution.

We then showed that this abstraction is capable of incorporating different weak execution paradigms (\cref{sec:general-weak,sec:mk-weak,sec:other-weak}). The so obtained model does not only allow for deriving sufficient stability criteria but also for predicting a worst-case state bound given information on the external disturbance and the controller's timing properties. We then proposed some thoughts on using this information for control-aware adaptive scheduling in \cref{sec:application-dyn_sched}.

Summarizing, convergence rate abstractions contribute to stability analysis of weakly-hard real-time control systems in that they provide an intermediate layer which extracts static convergence properties from the ideal closed loop and combines them with the dynamic aspect of disturbance, both physical and originating from timing uncertainty.

In future work we aim to use this approach for provably safe adaptive real-time scheduling of control systems with little run-time overhead. This is enabled by the safety guarantees of an abstraction, as well as the simplicity of the one-dimensional abstraction dynamics. Additionally, we aim to use abstractions to complement existing heuristic techniques with a safety guarantee, and thereby overcome the classical design conflict between good average-case performance and provable worst-case stability.

We hope that the presented concept simplifies the analysis of real-time weakly-hard control systems and stimulates further research. Comments and feedback are highly appreciated.

\newpage

\printbibliography

\end{document}